\DeclareFontFamily{OT1}{pzc}{}
\DeclareMathAlphabet{\mathpzc}{OT1}{pzc}{m}{it}
\newtheorem{proposition}{Proposition}
\newtheorem{lemma}{Lemma}
\newtheorem{definition}{Definition}
\newtheorem{claim}{Claim}
\title{Legitimacy of collective decisions: \\ a mechanism design approach\footnote{We thank Jeff Ely, Yukio Koriyama, Jean-François Laslier, Hervé Moulin, Tom Palfrey, Harry di Pei, Marzena Rostek, Fedor Sandomirskiy, Omer Tamuz, Dimitrios Xefteris, Yves Le Yaouanq and Siyan Xiong for their useful remarks and comments as well as conference participants at the Asian School in Economic Theory of the Econometric Society (Singapore 2022) and Université Paris Dauphine. This research is supported by two grants of the French National Research Agency (ANR) "Investissements d'Avenir": LabEx Ecodec/ANR-11-LABX-0047 and ANR-18-EURE-0005 / EUR DATA EFM.}}
\author{Margarita Kirneva\footnote{margarita.kirneva@polytechnique.edu. CREST \& Ecole Polytechnique} \: and Matías Nuñez\footnote{matias.nunez@polytechnique.edu. CREST, CNRS \& Ecole Polytechnique}}
\date{August 2023}
\begin{document}


\maketitle
\begin{abstract}
We design two mechanisms that ensure that the majority preferred option wins in all equilibria. The first one is a simultaneous game where agents choose other agents to cooperate with on top of the vote for an alternative, thus overcoming recent impossibility results concerning the implementation of majority rule. The second one adds sequential ratification to the standard majority voting procedure allowing to reach the (correct) outcome in significantly fewer steps than the widely used roll call voting. Both mechanisms use off-equilibrium lotteries to incentivize truthful voting. We discuss different extensions, including the possibility for agents to abstain.
\end{abstract}

\noindent Keywords: Majority, Voting, Implementation, Lottery, Random Sample.

\noindent JEL: D71; D72

\section{Introduction}

In legislative, referenda, and committee settings, majority voting is commonly used and is based on simple and intuitive axioms. This method plays a crucial role in analyzing democratic institutions; moreover it is particularly simple to use since voting sincerely is a weakly dominating strategy. However, there is the question of the legitimacy of the decision. For instance, if few voters take part in the election, the outcome might not reflect the preferences of the electorate and can be considered  illegitimate, as argued by \cite{qvortrup2005comparative} in the context of referenda.\footnote{Moreover, surveys find that voters often express regret for not participating in the election (see \cite{blais2019my} for a recent contribution). } Practitioners often introduce participation quorums to ensure that a sufficient number of voters participate, raising the likelihood that the collective decision is legitimate. The U.S. Senate, the U.K House of Commons, the U.N. General assembly and many other organizations use these quorums. However, the literature suggests that quorums may modify the incentives that voters face, leading to potential negative effects as the existence of quorum-busting, where the minority abstains to prevent the quorum to be reached (see \cite{herrera2010quorum} in an equilibrium model and by \cite{aguiar2016experimental} in an experimental setting).

Since the effectiveness of participation quorums to raise legitimacy is dubious, we consider the legitimacy issue through another perspective: the lens of implementation theory. The objective is to design voting mechanisms where all equilibria select some desirable alternative: an implementation problem à la \cite{maskin1999nash}. The situation involves two options, labeled as $a$ and $b$, which a group of voters must choose between. Each voter strictly prefers one option over the other. The goal of the designer is to select the option specified by a social choice rule (such as majority rule), without knowing the preference profile. However, voters are not forced to vote truthfully, so the designer aims to have a mechanism that ensures that the desired option wins in all equilibria of the game. \cite{xiong2021designing} demonstrates the existence of a two-alternative implementation problem. This problem states that, with two alternatives, a social choice  rule (among which majority rule is the most salient one) is implementable via a simultaneous voting mechanism\footnote{A voting mechanism allows each agent to vote for each option and is monotonic in the usual sense: if $x$ wins at some profile and gets additional support from some agent, $x$ remains the winner.} if and only if it is dictatorial. The category of voting mechanisms is large and covers most of the currently used ones.  Our contribution is the design of two classes of mechanisms that circumvents this implementation problem.

Our first contribution is the design of the Bloc formation mechanism, the first simultaneous one, different than an integer game, that Nash implements majority rule. The mechanism is not a pure voting mechanism because it requires each agent to vote for one of the two options and to nominate $p$ agents. The outcome depends on whether a majority group of voters vote for the same alternative, say $x$, while nominating only voters in the group. In such case, a bloc in favor of $x$ is formed, and $x$ wins. If no bloc is formed, the outcome is an endogenous lottery that incentivizes truth-telling. This mechanism's definition actually allows us to see a voting profile as a directed network, making the outcome of the vote depending on the network structure generated by the voting profile. The implementation result extends to a setting with an even number of voters.

Our second contribution deals with sequential voting mechanisms. Remark that sequential majority voting (or roll call voting) subgame perfect implements majority rule when votes are mandatory. This is no longer the case under a participation quorum due to quorum busting. Moreover, this system  can be lengthy as the number of steps required goes from $p+1 $ (the first voters all agree) to $2p+1$ ($p+1$ prefer one option and $p$ the other). Our contribution is the design of the Majority with Random confirmations, or RC mechanism that reduces the numbers of steps. Each voter votes for one of the two options (the voting stage). Half plus one of the voters are (randomly) selected, and each
one, consecutively, declares whether or not he approves of the winner of the vote ( the confirmation stage). If one
of them approves, the procedure terminates and that option is elected. If all
disapprove, a lottery is cast between the two options, where the probability
of each option is proportional to the votes it received in the voting stage.  The main advantage of the RC mechanism with respect to the sequential majority one is that the number of steps required is at most $p+1$ and might be smaller (since it terminates with the first approval declaration
from the $p + 1$ selected people). We extend the mechanism and the implementation result in two directions: (i) one where agents can abstain in any of the stages making the strategic problem richer and (ii) a setting with incomplete information where agents do not know the preferences of the rest of agents. Result $(i)$ is important since it shows that there is no need of introducing participation quorums to ensure that the outcome is correct. Result $(ii)$ shows that the implementation via RC mechanism does not depend on complete information. 
As a final result, we extend the logic of the RC mechanism beyond majority rule, as any supermajority rule can be implemented via a simple modification of the RC mechanism.

This work is organized as follows. After laying out the model in Section \ref{sec:setting}, Section \ref{sec:altmechanisms} considers the Bloc formation mechanism. Section  \ref{sec:majorityform}  analyzes the Majority voting with Random confirmations and its different extensions and Section \ref{sec:conclusion} concludes. The appendix contains the proof of the implementation via the BF mechanism as well as the analysis of three extensions of the RC mechanism (abstention, shares revelation and even number of voters).

\color{black}

\section{Review of the literature}

This paper belongs to a new strand of the literature in implementation where the aim is to find attractive implementing mechanisms that could be tested experimentally rather than analyzing whether a social goal is implementable as in the classic strand following \cite{maskin1999nash}. Among the papers in this literature, we could cite the recent contribution by \cite{chen2023getting} showing that any social choice rule is implementable via two-stage mechanisms involving transfers and lotteries and the one by \cite{echenique2022price} which proves that a two-stage mechanism involving prices suffices to implement efficient outcomes.

Our focus is on mechanisms with off-equilibrium lotteries. These mechanisms are known to be more permissive than deterministic mechanisms. See \cite{benoit2008nash}, \cite{bochet2007nash} and \cite{laslier2021solution} for recent contributions.

\cite{borgers2014robust} develop a similar idea to show that one
can achieve Pareto improvements over random dictatorship through simultaneous mechanisms. In the related framework of the Condorcet jury theorem, \cite{laslier2013incentive} proposes the introduction of a "slightly randomized majority rule" to ensure that the unique equilibrium is informative. Our mechanism shares a similar spirit to this idea but without randomization in equilibrium. Likewise, \cite{nunez2019truth} and \cite{azevedo2019strategy} describe similar ideas for large populations of agents. See \cite{moore1988subgame} and \cite{vartiainen2007subgame} for an analysis of rules which are subgame perfect implementable as well as \cite{aghion2012subgame} for the robustness of subgame perfect implementation to information perturbations.

\section{Model \label{sec:setting}}

We consider a finite set $I=\{1,\ldots,n\}$ of agents, with generic element $i$, who need to choose one option out of the set $A=\{a, b\}$ with generic element $x$. We assume that $n$ is odd ($n=2p+1$) except in Section \ref{sec:evenBF} and in Appendix \ref{section:rceven}.  Each agent has strict and complete preferences over $A$ where $a R_i b$ denotes that $a$ is strictly preferred to $b$. A vector $R=(R_1,\ldots, R_n) \in \mathcal{R}^n$ denoted the preference profile where $\mathcal{R}$ is the set of strict preference relations over $A$. A social choice function (SCF) is a mapping $f: \mathcal{R}^n \rightarrow A$ that selects a single option for each profile $R$. The majority rule, denoted $Maj$, is the SCF  that selects the majority preferred option : for each preference profile $R$,

\begin{equation}
   Maj(R)=
    \begin{cases}
        a & |\{i \in I : a R_i b\}| \geq p+1 \text{ and },\\
        b & \text{otherwise.}
    \end{cases}
\end{equation}

We let $\Delta$ denote the set of lotteries over $A$ with $\Delta=\{\beta:A \rightarrow [0,1] : \sum_{A}\beta(a)=1\}$. A simultaneous \textit{mechanism} is a function $g: M \rightarrow \Delta$ that assigns to every $m\in M$ a unique element of $\Delta$, where $M=\prod_{i\in I}M_i$, and $M_i$ is the strategy space of agent $i$.

We assume that preferences over lotteries satisfy stochastic dominance (\textbf{SD}). In our setting, \textbf{SD} requires that an agent (weakly) prefers lottery $\beta$ over lottery $\eta$ if and only if $\beta$ assigns (weakly) higher probability to her preferred option $x$: :

$$\beta \tilde{R}_i^{\textbf{SD}} \eta \Longleftrightarrow \beta(x) \geq  \eta(x) \text{ and } \beta R_i^{\textbf{SD}} \eta \Longleftrightarrow \beta(x) >  \eta(x),$$

where $\beta \tilde{R}_i^{\textbf{SD}} \eta$ means that agent $i$ weakly prefers $\beta$ to $\eta$ and $\beta R_i^{\textbf{SD}} \eta$ implies that she strictly prefers the former to the latter. This definition implies that a lottery $\beta$  stochastically
dominates lottery $\eta$ when $\beta$ yields at least as much expected utility
as $\eta$ for any von-Neumann Morgenstern utility representation consistent with the ordinal preferences.

\subsection{Implementation notions}

A simultaneous mechanism specifies a game-form: this means that, when the mechanism is coupled with preferences over options for each of the agents, it defines a normal-form game. A Nash equilibrium of the mechanism $g$ is a profile $m\in M$ such that $g(m)\tilde{R}_i^{\textbf{SD}}g(m'_i,m_{-i})$ for each $i\in I$ and any $m'_i\in M_i$. For a mechanism $g$, let $\mathrm{NE}^{g}(R)$ denote the set of Nash equilibria at preference profile $R$. A mechanism Nash implements a social choice function $f$ if for any $R$, the outcome of any member of $\mathrm{NE}^{g}(R)$ is an element of $f(R)$ and any element of $f(R)$ is the outcome of some member of $\mathrm{NE}^{g}(R)$.

A sequential mechanism is an extensive game form $\Gamma=(\mathcal{H},M,\mathcal{Z},g)$ where $\mathcal{H}$ is the set of all histories, $M=M_1\times \ldots M_n$ is the message space with $M_i=\prod_{h\in \mathcal{H}}M_i(h)$ for all $i$ where $M_i(h)$ is the set of available messages for $i$ at history $h$; $\mathcal{Z}$ describes the history that immediately follows history $h$ given that $m$ has been played; and $g$ is the outcome that maps the set of terminal histories. The notation $g(m;h)$ denotes the outcome that obtains when agents use strategy profile $m$ starting from history $h$.

There is an initial history $\emptyset\in \mathcal{H}$ and $h_t=(\emptyset,m^1,m^2,\ldots,m^{t-1})$ is the history at the end of period $t$, where for each $k$, $m^k \in M(h_k)$. If for $t'\geq t+1$, $h_{t'}=(h_t,m^t,\ldots,m^{t'-1})$, then $h_{t'}$ follows history $h_t$. Since $\Gamma$ contains finitely many stages, there is a set of terminal histories $H_T\subset \mathcal{H}$ such that $H_T=\{h\in \mathcal{H}: \text{ there is no $h'$ following $h$}\} $. A subgame-perfect equilibrium for the game $\Gamma(R)$ is an element $m\in M$ such that, for each agent $i$, $g(m;h)\tilde{R}_i^{\textbf{SD}}g(m'_i,m_{-i};h)$ for all $m'_i\in M_i$ and all $h\in \mathcal{H}\setminus H_T$. The set $SPNE^g(R)$ denotes the set of subgame-perfect equilibria of the game $\Gamma(R)$. We say that a mechanism implements the $SCR$ $f(R)$ in subgame-perfect equilibria, if for each $R$, the outcome of any member of $SPE(\Gamma(R))$ is an element of $f(R)$ and any element of $f(R)$ is the outcome of some member of $SPE(\Gamma(R))$.

 A similar idea applies to the incomplete information setting where the equilibrium notion upon which we rely is Perfect Bayesian equilibrium denoted PBE in the sequel (a formal definition is skipped \footnote{For the formal definition see \cite{fudenberg1991perfect}}).

\subsection{A discussion on the Majority mechanism}

The following two mechanisms are relevant in both theory and practice, as discussed in the introduction. The majority mechanism, denoted $\theta_M:A^n \rightarrow A$, requests each agent to vote for one of the two options and selects $Maj(m)$ as the winner. The majority mechanism with quorum $Q$, denoted $\theta_{Q}:(A \cup \text{abs} )^n \rightarrow A$, requires that each voter either announces their vote for an option in $A$ or abstains. For each profile $m$, the outcome $\theta_{Q}(m)$ can be expressed as:
\begin{enumerate} 
    \item if $n_{abs}\leq Q$, $a$ wins if $n_a \geq \frac{n-n_{abs}}{2}$, $b$ wins otherwise,
 \item if $n_{abs}\geq Q$, $b$ is the winner.
    \end{enumerate}
 where  $n_a$, $n_b$, and $n_{abs}$ respectively represent the number of votes for $a$, $b$, and abstentions

As we now discuss, both mechanisms do not implement the majority-preferred option.
 
Remark first that the mechanism $\theta_M$ fails to Nash implement the majority rule. Since the mechanism is strategy-proof, it has an equilibrium in sincere strategies where $Maj(R)$ is the winner. Although this equilibrium is focal, the mechanism has many other equilibria for each preference profile $R$, many of which do not select $Maj(R)$. For instance, the strategy profile $m$ with $m_i=b$ for all $i\in I$ is an equilibrium for any $R$ since no player can prevent the victory of $b$. While theoretically possible, there is a widely held belief that such equilibria rarely arise in practice: indeed, with the mechanism $\theta_M$, some agents in the majority need to vote for their worst-preferred option to allow for the defeat of the majority option. 

A similar logic to the one with the Majority mechanism shows that the mechanism $\theta_Q$ fails to implement $Maj(R)$. However, this failure to implement the majority rule is more credible than with $\theta_M$. Indeed, the majority winner of the recorded votes may fail to select the majority preferred option of the electorate since every voter decides whether to participate. This means that there are equilibria where every participating agent votes sincerely while $Maj(R)$ loses.

Regarding implementation via sequential mechanisms, remark that the dynamic counterpart of $\theta_M$ does implement majority rule in subgame-perfect equilibria. Indeed, voters correctly anticipate the moves of the successors which ends up in the correct option being selected. However, the dynamic $\theta_Q$ (much more used in practice than $\theta_M$) does not follow the same logic since adopting the strategy of not showing-up the minority of voters can prevent the victory of the majority-preferred option.

\section{A simultaneous mechanism\label{sec:altmechanisms}}

In this section, we introduce the Bloc formation mechanism (BF mechanism), the first mechanism, beyond integer games, that Nash implements the majority rule. We also comment about its interpretation as a network  formation game and show its strategic behavior with an even number of agents.

\subsection{Simultaneous blocs\label{simbloc}}

In the BF mechanism, the message $m_i$ of agent $i$ consists of (1.) a vote for an option $v_i$ and (2.)  a nomination of $p$ agents excluding herself (denoted $c_i$). Formally, the mechanism is denoted $\chi_{BF}: M\rightarrow \Delta$ with, for all $i\in I$, $M_i:=A \times 2_p^{-i}$ where $2_p^{-i}$ denotes all the sets of $p$ agents different from $i$. 

\vspace{.3cm}

The central notion of this mechanism is the idea of a bloc of agents. For each option $x$, a bloc in favor of option $x$  is a majority group of agents, denoted $B$, such that each agent votes for $x$ while nominating only agents in $B$. This can be formally defined as follows.
\begin{definition}\label{def:bloc}
For each $x\in A$, any set $B$ of agents with $|B|\geq p+1$ forms a \textbf{bloc} in favor of option $x$ in the profile $m$ if:
\begin{enumerate}
    \item $v_i=x$ $\forall i\in B$ (only votes for $x$) and,
    \item  $c_i\subset B$  $\forall i\in B$ (only nominations in $B$).
\end{enumerate}
\end{definition}
\vspace{.3cm}

The outcome of the BF mechanism depends on whether the message profile has a bloc. Denote by $B^m$ the set of blocs formed in profile $m$.  By definition, all blocs in a profile (if any) favor the same option since each bloc contains a majority of agents. Therefore, for any profile $m$ in which there is a bloc in favor of option $x$, $\chi_{BF}(m)=x$.

\vspace{.3cm}

\noindent If the profile $m$ does not contain a bloc, the outcome is the lottery $\eta(m)$ over $A$ with, for each $x\in A$:

$$\eta^x(m)=\sum_{i\in I}\eta_{i}(m) \mathbbm{1}\{v_i=x\} \: \text{ with } \eta_i(m)=\frac{|\{j\in I\setminus \{i\} : i \in c_j\}|}{np}.$$

To see the logic behind this formula, we let $\eta_i(m)$ be the weight of agent $i$, that is the share of nominations of $i$ in the total nominations $np$. By construction, $\sum_{i\in I}\eta_{i}(m)=1$ for any $m\in M$. When all the other agents nominate $i$, agent $i$ has the maximal possible weight of $\eta_i(m)=\frac{n-1}{np}$ whereas $\eta_i(m)=0$ when none of the other agents nominate $i$.

We thus interpret $\eta^x(m)$  as the sum of the weights of the agents who vote for $x$ so that, by construction, $\eta^a(m)+\eta^b(m)=1$. Notice that the weight $\eta^x(m)$ is strictly increasing in the number of nominations for agents voting for $x$ and, thus, in the number of agents voting for $x$ among nominated agents.

The previous rules of the mechanism can be summarized as follows. For each message profile $m$, the outcome of the mechanism $\chi_{BF}$ coincides with:
$$\chi_{BF}(m)=\begin{cases}
a & \text{if } m \text{ admits a bloc in favor of } a, \\ 
b & \text{if } m \text{ admits a bloc in favor of } b, \\
\eta(m) & \text{otherwise.}
\end{cases}$$

\medskip

To conclude the description of the BF mechanism notice that it is strategy-proof, that is for any agent $i$ with $a R_i b$ (resp. $bR_i a$), any nomination $c_i\in 2^{-i}_p$ and any message $m_{-i}$, agent $i$ weakly prefers to vote for $a$ (resp. $b$) since:

$$\chi_{BF}(a,c_i,m_{-i})\tilde{R}^{SD}_i\chi_{BF}(b,c_i,m_{-i}).$$

In what follows, we show that in any equilibrium of the BF mechanism, most agents strictly prefer to vote truthfully, ensuring that the majority-preferred option wins. This majority of voters that strictly prefer to vote honestly represents the main advantage of the BF mechanism for the usual majority voting one where, in some equilibria, all voters may be indifferent between their two votes.

\subsection{Voting profile as a directed graph\label{blocnetwork}}

It is useful to consider blocs in terms of the graph theory. Notice that for any message profile $m=(v,c)$ the nomination profile $c$ creates a directed graph in which the vertices are the agents and the edges their nominations. Formally, for each message profile $m=(v,c)$ denote by $G_{m}=(I,C)$ the directed graph formed by $c$ where the set $I$ of agents coincides with the set of vertices, and $C$ is the adjacency matrix such that $C_{ij}=1$ if $j \in c_i$ and $C_{ij}=0$ otherwise.

We can formulate an option definition of bloc using the adjacency matrix.

\begin{definition}\label{def:blocgraph}
For each $x\in A$, a set $B$ of agents with $|B| \geq p+1$ forms a \textbf{bloc} in favor of  option $x$ in profile $m$ if:

\begin{enumerate}
    \item  $v_i=x$ $\forall i \in B$ and,
    \item the restriction of $C$ to the set $B$, denoted $C_B$, is such that $\sum_{h\in B} C_{ih}=p$ $\forall i \in B$. 
\end{enumerate}

\end{definition}

It follows from Definition \ref{def:blocgraph} that if a set $B$ of agents forms a bloc in favor of $x$, then there is no path from any agent $i \in B$ to any agent $j \in I \setminus B$ in the associated graph $G_m$.  Remark that Definition \ref{def:bloc} is equivalent to Definition \ref{def:blocgraph}. Firstly, both definitions require at least $p+1$ agents to vote for the same option. To show that $\sum_{h\in B} C_{ih}=p$ $\forall i \in B$ is equivalent to agents in $B$ voting only for other agents in $B$, observe that the row $i$ of the adjacency matrix $C$ gives the nominations of agent $i$. Thus, $\sum_{h\in I} C_{ih}=p$ follows by definition. Since, according to Definition \ref{def:blocgraph}, $\sum_{h\in B} C_{ih}=p$ $\forall i \in B$, $C_{ih}=0$ for all $h \in I \setminus B$, that is no agent $i \in B$ is nominating an agent outside $B$ which proves the equivalence.

We need some additional definitions to formulate the main results regarding blocs.

We say that $G_J$ is a subgraph of a graph $G$ induced by the set $J \subseteq I$ of vertices if it includes all vertices in $J$ and its adjacency matrix $C_J$ is the restriction of $C$ to
$J$ (i.e includes only rows and columns corresponding to vertices in $J$).

\begin{definition}
A subgraph $G_J$ for some $J \subseteq I$ of a graph $G$ is strongly connected if there exists a path in each direction between any pair $i$, $j$ of vertices with $i,j \in J$.
\end{definition}

\begin{definition}
A bloc $B \subseteq I$ in favor of $x$ is effective iff it is strongly connected.
\end{definition}

 Each vertex in Figure \ref{fig:blocs} represents an agent, the letters within represent their votes, and the arrows indicate nominations. In Figure \ref{subfig:nobloc} , there are no blocs in the profile. The only potential bloc involves agents $\{1, 2, 3\}$, as they all vote for $a$ while the others vote for option $b$. Agent 1 and agent 3 nominate agent 5 and agent 4, respectively, which violates the conditions required to form a bloc.
 
  In Figure \ref{subfig:bloc}, the profile admits two blocs:   $\{1,4,5\}$ and $\{1,2,3,4,5\}$. Indeed, in both of these subsets all agents vote for $b$ and nominate only agents within the bloc. However, only the bloc $\{1,4,5\}$ is effective because it is strongly connected. Notice that there is no path from agent 4 to agent 2, which prevents $\{1,2,3,4,5\}$ from being an effective bloc.

\begin{figure}[h]
\center
\begin{subfigure}[b]{0.45\textwidth}
\centering
  \includegraphics[width=0.8\textwidth]{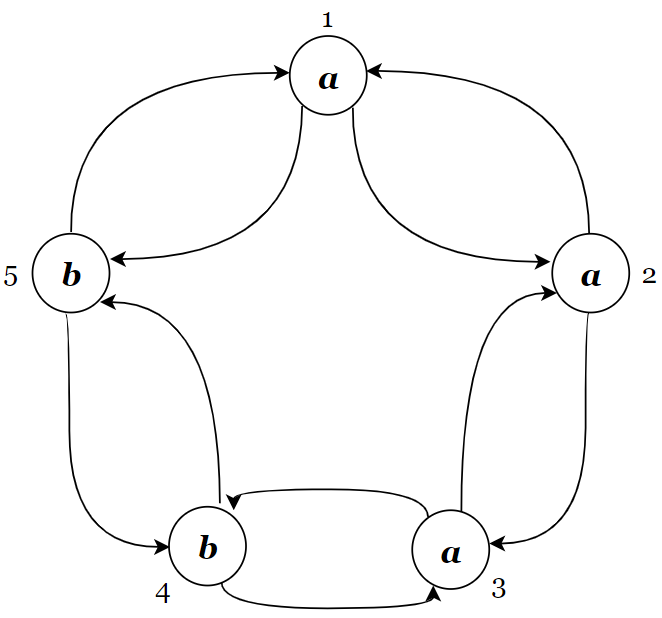}
  \caption{No blocs.}
  \label{subfig:nobloc}
\end{subfigure}
\hfill
\begin{subfigure}[b]{0.45\textwidth}
\centering
     \includegraphics[width=0.8\textwidth]{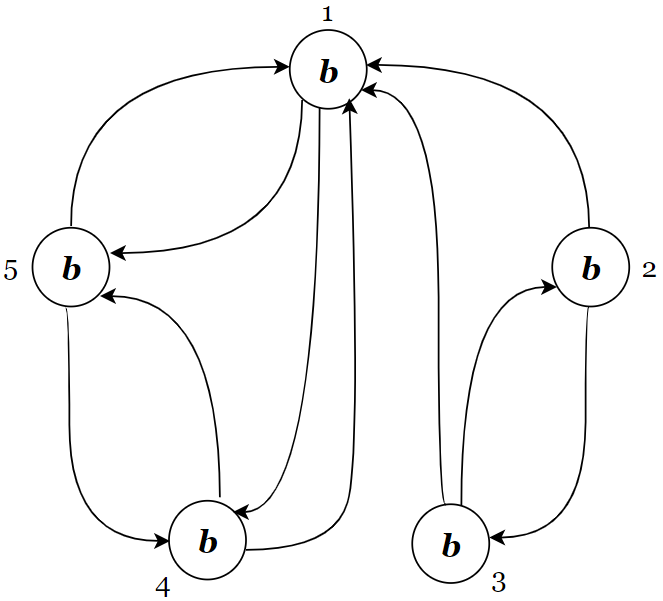} 
     \caption{Two blocs, $\{1,4,5\}$ and $\{1,2,3,4,5\}$, in favor of $b$.}
  \label{subfig:bloc}
\end{subfigure}
\caption{Voting profiles formed by the BF mechanism.}
\label{fig:blocs}
\end{figure}

The next proposition shows the existence and uniqueness of effective blocs.

\begin{lemma}\label{prop:effective}
Any profile $m$ admitting a bloc also admits an effective bloc $B^*$. Moreover, the effective bloc $B^*$ is unique and satisfies $B^*=\cap_{B \in B^{m}}B$.
\end{lemma}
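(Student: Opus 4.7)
The plan is to establish three facts in sequence: (i) the intersection of any two blocs (in favor of the same option $x$) is itself a bloc, (ii) the smallest bloc $B^\ast=\bigcap_{B\in B^m}B$ is effective, (iii) no other effective bloc exists. Note that every bloc in $m$ favors the same option, since any two blocs in favor of different options would be disjoint but each of size at least $p+1$, violating $n=2p+1$.

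For (i), fix blocs $B_1,B_2$ in favor of $x$ and $i\in B_1\cap B_2$. Since $i\in B_1$, $c_i\subseteq B_1$; since $i\in B_2$, $c_i\subseteq B_2$; hence $c_i\subseteq B_1\cap B_2$, and also $v_i=x$. In particular $|B_1\cap B_2|\geq|\{i\}\cup c_i|=p+1$, so $B_1\cap B_2$ satisfies both conditions of Definition~\ref{def:bloc}. Iterating over the finite family $B^m$ yields that $B^\ast$ is a bloc.

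For (ii), I would proceed by a reachability argument. For any $i\in B^\ast$, let $S_i\subseteq B^\ast$ be the set of vertices reachable from $i$ via a directed path in the subgraph induced by $B^\ast$ (including $i$ itself). Every $k\in S_i$ satisfies $c_k\subseteq B^\ast$ (because $B^\ast$ is a bloc) and every element of $c_k$ is reachable from $i$ through $k$, so $c_k\subseteq S_i$; moreover all agents in $S_i$ vote for $x$. Finally $|S_i|\geq|\{i\}\cup c_i|=p+1$. Hence $S_i$ is itself a bloc, so by minimality $B^\ast\subseteq S_i$, i.e.\ $S_i=B^\ast$. Thus every vertex of $B^\ast$ is reachable from every other, which is exactly strong connectedness.

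For (iii), let $B'$ be any effective bloc. By (i) and minimality $B^\ast\subseteq B'$. Suppose for contradiction there exists $j\in B'\setminus B^\ast$ and pick any $i\in B^\ast$. Strong connectedness of $B'$ gives a directed path $i=k_0\to k_1\to\cdots\to k_\ell=j$ in the subgraph induced by $B'$. But $k_0\in B^\ast$ forces $k_1\in c_{k_0}\subseteq B^\ast$, and inductively $k_t\in B^\ast$ for every $t$, so $j\in B^\ast$, a contradiction. Hence $B'=B^\ast$, proving uniqueness. The main subtlety is step (ii): recognising that reachable sets within a bloc are themselves blocs is the lemma's core combinatorial observation; once this is in hand, both existence of an effective bloc and the explicit formula $B^\ast=\bigcap_{B\in B^m}B$ follow smoothly.
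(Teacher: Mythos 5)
Your proof is correct, but it is organized differently from the paper's. The paper proves existence first: it argues that any bloc which is not strongly connected can be split, via a reachability-based partition around a vertex $j$ that some vertex $i$ cannot reach, into a strictly smaller set $B'$ whose members nominate only inside $B'$ and hence is itself a bloc; descending (the minimal bloc size being $p+1$) yields an effective bloc inside every bloc. Uniqueness is then shown by taking two distinct effective blocs, noting their intersection is nonempty and that outgoing paths from a common member can never leave either bloc, contradicting strong connectedness; the formula $B^\ast=\cap_{B\in B^m}B$ is deduced at the end from these two facts. You instead start from a closure property the paper never states explicitly: blocs (all favoring the same option, by the $n=2p+1$ counting) are closed under pairwise intersection, because a common member $i$ has $c_i\subseteq B_1\cap B_2$ and $|\{i\}\cup c_i|=p+1$; this makes $B^\ast=\cap_{B\in B^m}B$ a bloc immediately and gives the formula up front. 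Your effectiveness step, showing that the forward-reachability set $S_i$ of any $i\in B^\ast$ is itself a bloc and hence, by minimality, equals $B^\ast$, is a cleaner substitute for the paper's somewhat informal partition-and-descend argument, and your uniqueness step uses the same path-confinement idea as the paper's. What each route buys: yours yields the intersection formula and minimality crisply and with less hand-waving; the paper's yields directly the slightly stronger intermediate fact that every bloc (not just the intersection) contains an effective bloc. The only point you should make explicit is that $B_1\cap B_2\neq\emptyset$ before picking $i$, which follows from the same counting you already use in your preamble ($|B_1|+|B_2|\geq 2p+2>n$); with that line added, the argument is complete and fully consistent with Definition~\ref{def:bloc}.
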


\begin{proof}
\textbf{Existence.} Let $m$ be some profile with $B^{m} \neq \emptyset$ and consider w.l.o.g. that all blocs are in favor of $a$. Assume, for the sake of contradiction, that there is no effective bloc in $m$. This means that any bloc $B \in B^m$, is not effective and therefore not strongly connected. It follows that there are 2 vertices, namely $i$ and $j$,  with no path from $i$ to $j$, from $j$ to $i$, or in both directions. W.l.o.g. assume that there is no path from $i$ to $j$. It follows that we can find a partition $(B',B\setminus B')$ of $B$ such that: (1) $i \in B'$, (2)
there is no path from any agent in $B'$ to agent $j$ and (3) there is a path from any agent in $B\setminus B'$ to $j$. The existence of this partition implies  that agents in $B'$ only nominate agents in $B'$ so that $c_h \subset B'$ $\forall h \in B'$. Moreover, since each agent nominates $p$ agents we have that $|B'| \geq p+1$. We have thus proved that $B'$ is a bloc. 

It follows that if $B$ is a bloc which is not strongly connected, it contains another bloc of smaller size. Thus, since the minimal size of a bloc is $p+1$, for each bloc $B$ which fails to be strongly connected, there is a bloc contained in $B$ which is strongly connected.

\textbf{Uniqueness.} Assume by contradiction that for some profile $m$ there are two non-identical effective blocs $B^*$ and $B^{'*}$. Since each bloc consists of at least $p+1$ agents, $B^* \cap B^{'*} \neq \emptyset$. Thus, there is some agent $i$ such that $i \in B^* \cap B^{'*}$. By definition of a bloc, there is no path from $i$ to any $j \in B^* \setminus (B^* \cap B^{'*})$ since $i\in B^{'*}$. Likewise, there is no path from $i$ to any $h \in B^{'*} \setminus (B^* \cap B^{'*})$. By assumption, blocs $B^*$ and $B^{'*}$ are effective and, thus, strongly connected. It follows that there is a path between any two vertices of an effective bloc, reaching the desired contradiction.

$\boldsymbol{B^*=\cap_{B \in B^{m}}B}$\textbf{.} We have shown that each bloc which is not effective includes an effective bloc. We have also shown that the effective bloc is unique. The claim follows directly from the two observations.
\end{proof}

Lemma \ref{prop:effective} shows that in any profile $m$ with blocs, the intersection of the blocs is non-empty and is a bloc itself. Moreover, this intersection is strongly connected meaning that there is no agent which can be removed from it in such way that the profile still admits a bloc. This property has an important implication on the strategic behavior, as summarized by the next result: for any profile $m$ admitting a bloc, any agent in the effective bloc has a strategy $m'_i$ that allows her to break all blocs in $m$ (i.e. no bloc in $(m'_i,m_{-i})$).

\begin{lemma}\label{prop:deveffective}
For any profile $m$ admitting a bloc, any agent $i$ in the effective bloc $B^*$ has a strategy $m'_i$ such that $B^{(m'_i,m_{-i})}=\emptyset$.
\end{lemma}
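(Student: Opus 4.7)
The plan is to construct an explicit deviation for $i$ that uses both degrees of freedom in her message, namely her vote and her nominations. I would first assume without loss of generality that every bloc in $B^{m}$ favors option $a$ (this is legitimate since, by Lemma~\ref{prop:effective}, all blocs contain $B^*$ and therefore share the same preferred option), which forces $v_i=a$ in $m$. I would then propose the deviation $m'_i := (b, c'_i)$, where $c'_i$ is any set of $p$ agents drawn from $B^*\setminus\{i\}$; this choice is feasible since $|B^*|\geq p+1$. The idea is that $i$ does not only flip her vote, she also redirects her entire nomination set inside the original effective bloc, which will simultaneously destroy the existing blocs in favor of $a$ and prevent any new bloc in favor of $b$ from emerging.

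The verification splits into two cases. Suppose first that $(m'_i,m_{-i})$ still contained a bloc $B$ in favor of $a$. Since $i$ now votes for $b$, $B$ would exclude $i$; and since no other agent has changed her message, $B$ would still satisfy both conditions of Definition~\ref{def:bloc} in the original profile $m$, so $B \in B^{m}$. Lemma~\ref{prop:effective} then forces $i \in B^* \subseteq B$, contradicting $i \notin B$.

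Suppose instead that a bloc $B'$ in favor of $b$ arose. The key counting observation is that at most $p$ agents voted for $b$ in $m$ (because at least $p+1$ agents belong to $B^*$ and all of them voted for $a$), so at most $p+1$ agents vote for $b$ in $(m'_i,m_{-i})$. Combined with $|B'|\geq p+1$, this yields $|B'|=p+1$ and $i \in B'$. The nomination condition on a bloc then requires $c'_i \subseteq B'$; but by construction $c'_i \subseteq B^*\setminus\{i\}$, whose members all vote for $a$ in $(m'_i,m_{-i})$, and hence cannot lie in $B'$. This contradiction closes the argument.

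The main obstacle I anticipate is this second case: simply flipping one's vote is not sufficient, because it can raise the number of $b$-voters to exactly $p+1$ and accidentally open the door to a bloc on the opposite side. Steering the nominations into $B^*\setminus\{i\}$ is what neutralizes this risk, and this is precisely where the size and strong connectedness of the effective bloc provided by Lemma~\ref{prop:effective} are exploited.
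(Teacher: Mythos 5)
Your argument is correct, and the deviation you construct is essentially the same one as in the paper: since $i\in B^*$ and $B^*$ is a bloc, the paper's deviation $(y,c_i)$ already has its nominations inside $B^*\setminus\{i\}$, so your $c'_i$ is just the general form of the same move. Where you differ is in how you certify that no bloc survives. The paper argues via the graph structure: it first shows there is a path from any member of any bloc to $i$ (using $c_j\cap B^*\neq\emptyset$ and the strong connectedness of $B^*$), and then invokes the no-outgoing-path property of blocs to kill any bloc for the original option, handling the opposite option with a terse remark about $c_i$ pointing only at $x$-voters. You instead use the other half of Lemma \ref{prop:effective}: any surviving bloc for $a$ would already have been a bloc of $m$ (its members' messages are unchanged), hence would contain $B^*=\cap_{B\in B^m}B$ and thus $i$, a contradiction; and you rule out a new bloc for $b$ by an explicit count (at most $p+1$ agents vote $b$ after the deviation, so such a bloc must contain $i$, yet $c'_i$ points only at $a$-voters). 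Your route is a bit more self-contained and makes explicit the counting step that the paper leaves implicit, at the cost of not illustrating the path-based intuition the paper develops for its network interpretation; both proofs rest on the same structural lemma and are equally valid.
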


\begin{proof}
Take some $m$ with $B^{m}\neq \emptyset$. W.l.o.g. assume that all blocs in $B^{m}$ are in favor of $x$ and consider some agent $i$ in the effective bloc $B^*$.

Observe first that there is a path from any $j \in B$  to $i$. Indeed, $|c_j|=p+1$ by definition, and therefore $c_j \cap B^*\neq \emptyset$ so that there is a path from $j$ to some agent $h\in c_j\cap B^*$. Moreover, since $B^*$ is strongly connected, there is a path from $h$ to $i$ since both belong to $B^*$: the existence of a path from $j$ to $i$ follows.

Consider a deviation $m'_i=(y,c_i)$, that is agent $i$ votes for $y$ instead of $x$ while keeping her nominations unchanged. After such deviation, $i$ cannot be a part of a bloc in favor of $x$ since she votes for $y$.

By definition of a bloc, for any bloc $B$ there is no path from the members of the bloc to the agents in $I \setminus B$. However, as stated before, since $i \in B^*$, there is a path to $i$ from any member of any bloc in $B^{m}$. Thus, there is no bloc in favor of $x$ in profile $(m'_i, m_{-i})$. 

Notice also that since $B^{m} \neq \emptyset$ and since $c_i$ includes only agents voting for $x$, there can be no bloc in favor of $y$ in $(m'_i,m_{-i})$. Thus, $B^{(m'_i,m_{-i})}=\emptyset$, ending the proof.\end{proof}

\subsection{Nash implementation \label{blocnash} }
 The main result of this section is as follows. 
\begin{proposition}\label{theorem:CFimpl}The Bloc formation mechanism Nash implements the majority rule.
\end{proposition}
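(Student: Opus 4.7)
The plan is to prove both directions of the Nash implementation claim, assuming without loss of generality that $Maj(R) = a$, so $|I_a| \geq p+1$.

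For existence, I exhibit an explicit bloc-forming equilibrium. Pick any $B \subseteq I_a$ with $|B| = p+1$, set $v_i = a$ and $c_i = B \setminus \{i\}$ for each $i \in B$, and let the remaining agents vote truthfully and nominate $p$ agents inside $B$. Then $B$ is a bloc in favor of $a$, so $\chi_{BF}(m) = a$. Each $i \in B$ already obtains her preferred outcome, and any unilateral deviation either preserves a bloc in favor of $a$ or breaks $B$; in the latter case no bloc in favor of $b$ can form, since at most $|I_b| \leq p$ voters support $b$, and the resulting lottery is not strictly better. Agents outside $B$ cannot affect the bloc conditions for $B$ and thus cannot change the outcome.

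For uniqueness, suppose $m$ is a Nash equilibrium with $\chi_{BF}(m) \neq a$. Case 1: $m$ admits a bloc, necessarily in favor of $b$. By Lemma 1 the unique effective bloc $B^*$ exists, and because $|B^*|, |I_a| \geq p+1$ with $|I| = 2p+1$, we have $B^* \cap I_a \neq \emptyset$. Pick $i \in B^* \cap I_a$. Lemma 2 yields a deviation $m'_i = (a, c_i)$ destroying every bloc, and strong connectivity of $B^*$ ensures some agent in $B^*$ nominates $i$, so that $\eta_i > 0$. Since $i$ now votes $a$, $\eta^a \geq \eta_i > 0$ in the post-deviation outcome, which is strictly preferred to $b$ by $i$, contradicting equilibrium.

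Case 2: $m$ admits no bloc, so $\chi_{BF}(m) = \eta(m)$ with $\eta^b > 0$. Two observations drive the contradiction. First, strategy-proofness together with equilibrium forces every voter with an untruthful vote to have zero weight in $m$, so no agent nominates an untruthful voter. Second, best-response for nominations implies that if $i \in I_a$ has a $b$-voter in $c_i$ and some $a$-voter lies outside $c_i \cup \{i\}$, the swap strictly raises $\eta^a$. Set $V_a = \{j : v_j = a\}$ and $V_b = I \setminus V_a$, and split on $|V_a|$. If $|V_a| \geq p+1$, the swap target always exists, so best-response forces $c_i \subseteq V_a$ for each $i \in I_a$; combined with the first observation, $c_i \subseteq I_a^a := I_a \cap V_a$. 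Either $|I_a^a| \geq p+1$, whereupon $I_a^a$ itself forms a bloc in favor of $a$ (contradicting no-bloc), or $|I_a^a| \leq p$, in which case $c_i \subseteq I_a^a \setminus \{i\}$ with $|c_i| = p$ is infeasible for $i \in I_a^a$, so $c_i$ must contain either a swappable $b$-voter (profitable deviation) or an untruthful voter (contradicting the first observation). If $|V_a| \leq p$, then $|V_b| \geq p+1$ and the set $I_b^b := I_b \cap V_b$ of truthful $b$-voters is nonempty (else $\eta^b = 0$); the symmetric analysis applied to $j \in I_b^b$ enforces $c_j \subseteq I_b^b \setminus \{j\}$ with $|c_j| = p$, which is infeasible since $|I_b^b| \leq p$, and yields the analogous contradiction. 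The main obstacle is Case 2, where untruthful voting is not immediately ruled out and is excluded only by combining the zero-weight property of untruthful voters with the best-response nomination structure, which either forces a majority bloc or exposes a unilateral improvement.
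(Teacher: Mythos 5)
Your proof is correct, and for existence and for equilibria with a bloc it follows the paper's route almost verbatim: the same bloc-of-majority-agents construction, and in Case 1 the same use of Lemma \ref{prop:effective} and Lemma \ref{prop:deveffective} plus the observation that a member of $B^*$ is nominated inside $B^*$, so her vote switch leaves a lottery putting positive weight on $a$. Where you genuinely depart is the no-bloc case. The paper's step B.2 starts from the same two observations (nominated agents are sincere; nominations best-respond by targeting own-side voters) but closes constructively: it shows some majority agent votes $b$ while receiving no nominations, and then a minority agent profitably re-nominates her in place of an $a$-voter. You instead split on $|V_a|$ and argue by counting: best responses plus the zero-weight property force each truthful $a$-voter (resp.\ $b$-voter) to nominate only inside $I_a\cap V_a$ (resp.\ $I_b\cap V_b$), and that set either has size at least $p+1$, in which case it is a bloc (contradicting no-bloc), or is too small to absorb $p$ nominations, exposing a profitable swap or a nominated untruthful voter. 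The paper's version buys a single explicit deviation; yours makes transparent that the equilibrium nomination constraints are outright unsatisfiable absent a bloc and treats the two options symmetrically. Two details you should state explicitly to make the argument airtight: (i) each vote switch or nomination swap could in principle create a bloc, so add the one-line check that any bloc so created must contain the deviator and favor the option she now votes for, hence is weakly better for her -- this also repairs the imprecise justification in your existence step that ``at most $|I_b|\leq p$ voters support $b$'' after a deviation (the deviator herself may vote $b$; the clean point is simply that an agent already obtaining her top outcome has no strictly profitable deviation); and (ii) in the subcase $|V_a|\geq p+1$ note that $I_a\cap V_a\neq\emptyset$ (immediate from $|I_b|\leq p$), which your infeasibility step implicitly uses.
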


The formal proof of Proposition \ref{theorem:CFimpl} can be found in the Appendix, but we provide some informal explanation in the following paragraphs. The existence of an equilibrium selecting the majority preferred option is simple. If there are at least $p + 1$ agents who prefer option $a$ and these agents vote for $a$ and nominate each other, this creates a bloc in favor of $a$. This profile is an equilibrium since no agent within the bloc wants to deviate (as they obtain their most preferred outcome) and no agent outside the bloc can alter the outcome (by definition).

 To discard the existence of an equilibrium with an outcome being a lottery with full support, observe that the weight with which each option wins is strictly increasing (1) in the number of votes it obtains from agents with positive weight and (2) in the number of nominations that agents voting for this option get. Therefore, it is optimal for agents to vote truthfully and nominate as many agents voting for their preferred option as possible, leading to the formation of a bloc in favor of option $a$. 

Finally, we can argue that no bloc can be formed in favor of $b$. Assume, by contradiction, that such a bloc exists. As shown in Section \ref{blocnetwork}, an effective bloc exists and includes some majority agent. Then, according to Lemma \ref{prop:deveffective}, a majority agent who is a member of the effective bloc can break all the blocs in favor of option $b$ in the profile leading to a lottery being the outcome. Such a deviation is profitable for a majority agent, contradicting the existence of an equilibrium in which a bloc in favor of option $b$ is formed. The following example illustrates the logic of the mechanism on this precise point.

\noindent \textit{Example 2:} Consider a profile $R$ with agents $1,2,3$ preferring $a$ and agents $4,5$  preferring $b$ so that $Maj(R)=a$. Remark that no equilibrium profile admits a bloc in favor of $b$. Indeed, let $m=(c,v)$ be the profile where each agent votes $b$ (i.e. $v_i =b\: \forall i$) and  nominations are as follows: $c_1=4,5$,  $c_2=1,3$, $c_3=1,2$, $c_4=1,5$ and $c_5=1,4$. The profile $m$ admits two blocs:  $\{1,4,5\}$ and $\{1,2,3,4,5\}$. The bloc $\{1,4,5\}$ is the effective one since one cannot find a smaller group that nominate each other while voting $b$. If any agent $i\in\{1,4,5\}$ deviates to $m'_i=(a,c_i)$, the profile $(m'_i,m_{-i})$ admits no bloc and the outcome is a lottery between $a$ and $b$. Since agent 1 prefers $a$ to $b$, she has a profitable deviation and thus the profile $m$ is not an equilibrium. 



\subsection{Bloc formation with an even number of voters \label{sec:evenBF}}

In this section we show that the implementation result for BF mechanism extends to the case when the number of agents is even, i.e. $n=2p$.

In order to incorporate the possibility of even number of agents we need to extend the notion of the majority rule. In this section $Maj$ is a social choice correspondence (SCC) such that for each preference profile $R$:

\begin{equation}
    Maj(R)=\begin{cases}
        a & |\{i \in I:aR_ib\}|\geq p+1,\\
        b & |\{i \in I:bR_ia\}| \geq p+1,\\
        \{a,b\} & \text{otherwise.}
    \end{cases}
\end{equation}

We extend the notion of Nash implementation and say that a mechanism $g$ Nash implements $Maj(R)$ if for all $R$ with $|Maj(R)|=1$, the outcome of any $NE^g(R)$ is $Maj(R)$; and if $Maj(R)=\{a,b\}$, the outcome of any $NE^g(R)$ is any lottery over the two options.

The definition of BF mechanism remains the same as introduced in Section \ref{simbloc}., notably, the minimal size of a bloc is still $p+1$.

\begin{lemma}
    With an even number of agents, the BF mechanism Nash implements the majority rule. For all $R \in \mathcal{R}$ such that $Maj(R)=A$, the unique equilibrium outcome is a lottery with equal weights.
\end{lemma}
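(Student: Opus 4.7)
The proof splits naturally according to whether $|Maj(R)|=1$ or $Maj(R)=\{a,b\}$. My first observation is that Lemmas~\ref{prop:effective} and~\ref{prop:deveffective} go through verbatim for $n=2p$: their arguments only rely on the facts that $|B|\geq p+1$, $|c_i|=p$, and that any two sets of size $\geq p+1$ in a universe of $2p$ must intersect, so parity plays no role.

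Suppose first that $Maj(R)=a$, with $q\geq p+1$ agents preferring $a$. I would mirror the proof of Proposition~\ref{theorem:CFimpl}. For existence of an equilibrium selecting $a$, fix any $(p+1)$-subset $B$ of $a$-preferers, have each $i\in B$ set $v_i=a$ and $c_i=B\setminus\{i\}$ (arbitrary play outside $B$); then $B$ is a bloc, $a$ wins, and no agent benefits from deviating. For uniqueness, a bloc in favor of $b$ is ruled out because the effective sub-bloc $B^*$, of size $\geq p+1$, contains at least $(p+1)-(p-1)=2$ agents preferring $a$ (there are at most $p-1$ $b$-preferers), any of whom can by Lemma~\ref{prop:deveffective} break every bloc and obtain a lottery with positive mass on $a$ (by strong connectedness of $B^*$ she is nominated within $B^*$, and her weight now counts for $a$), which she strictly prefers. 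Lottery equilibria are excluded as in the odd case: the strict incentives force truthful voting and maximally same-preference nominations, and because $q-1\geq p$ the $a$-preferers can nominate only $a$-preferers, forming a bloc in favor of $a$ and contradicting the lottery assumption.

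The novel case is $Maj(R)=\{a,b\}$. I would first show that no equilibrium admits a bloc: if, say, a bloc in favor of $a$ existed then the effective sub-bloc $B^*$ (size $\geq p+1>p$) would contain some $b$-preferer $i$; by Lemma~\ref{prop:deveffective}, $i$ has a deviation breaking every bloc, and after switching her vote to $b$ the resulting lottery has $\eta^b\geq\eta_i>0$ (strong connectedness forces someone in $B^*$ to nominate $i$, and her weight now counts for $b$)---a strict improvement over the losing bloc outcome. Next, in any equilibrium (which has no bloc) I would show all agents vote truthfully. Suppose an $a$-preferer $i$ votes $b$; then each $b$-preferer $j$ has $p$ other $b$-voters available (the $p-1$ remaining $b$-preferers plus $i$), and in any best response must fill all $p$ nomination slots with exactly these $p$ agents (swapping any non-$b$-voter nomination for $i$ strictly raises $\eta^b$, and this swap never creates a bloc because $i$ herself still nominates her $p-1$ same-preference $a$-voter peers outside the $b$-voter group). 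Hence $i$ is nominated by all $p$ $b$-preferers, so $\eta_i\geq 1/(2p)>0$, and $i$'s vote switch to $a$ keeps the counts at $(p,p)$ so no bloc can form and $\eta_i$ transfers from $\eta^b$ to $\eta^a$---a strict improvement. With truthful voting, each agent's unique optimal nomination is to pick all $p-1$ same-preference peers plus one opposite-preference agent, and direct counting gives $p(p-1)+p=p^2$ nominations on $a$-voters out of $np=2p^2$, so $\eta^a=\eta^b=1/2$. Existence is then routine: no vote switch creates a bloc (the counts become $(p\pm1,p\mp1)$ but the deviator's $p-1$ same-preference nominations remain outside the swollen side), and any unilateral change simply lowers the weight attributed to the deviator's preferred option.

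The main delicate point I anticipate is the truthful-voting step in the tie case: one might fear that a mis-voting $a$-preferer has weight zero and is merely indifferent, escaping the deviation argument. The clean resolution is that here there are exactly $p$ $b$-voters other than any given $b$-preferer, so every one of her $p$ nomination slots must go to a $b$-voter and in particular nominating the mis-voter is strictly (not merely weakly) optimal; the mis-voter is therefore automatically nominated $p$ times, forcing her weight to be positive and closing the argument.
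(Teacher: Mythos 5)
Your treatment of the profiles with a strict majority tracks the paper, which simply invokes the proof of Proposition~\ref{theorem:CFimpl} verbatim, and your existence argument and weight computation ($\eta^a=\eta^b=1/2$ under truthful voting with maximal same-preference nominations) for the tie case are fine. The genuine gap is in your uniqueness argument for the tie case, at the step ``in any (no-bloc) equilibrium all agents vote truthfully.'' Your deviation argument silently assumes that the lone mis-voter is the only insincere agent: the counts ``$p$ other $b$-voters available (the $p-1$ remaining $b$-preferers plus $i$)'' and, in your ``delicate point,'' ``exactly $p$ $b$-voters other than any given $b$-preferer'' are only valid when every other agent votes sincerely, which is precisely what you are trying to establish. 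In an arbitrary candidate equilibrium several agents may mis-vote; if, say, two or more $a$-preferers vote $b$, a $b$-preferer has more than $p$ $b$-voters to choose from and her optimal nomination need not include the particular mis-voter $i$, so you cannot conclude $\eta_i>0$; and configurations in which $b$-preferers also mis-vote are not covered at all. (The parenthetical that the swap ``never creates a bloc because $i$ herself still nominates her $p-1$ same-preference $a$-voter peers'' is likewise an unwarranted assumption about $i$'s equilibrium nominations -- though that worry is anyway harmless, since a swap that completed a $b$-bloc would only help the swapping $b$-preferer.) The truthfulness claim can in fact be rescued by a counting argument over an arbitrary number of mis-voters (any $b$-preferer who votes $b$ can exclude at most $\#\{b\text{-voters}\}-1-p$ of them from her nominations, so some mis-voter is always nominated and hence deviates), but as written your proof does not cover these cases.

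The paper takes a different and more economical route for the tie case: it never proves full truthfulness. After ruling out blocs (your step (i), which matches its Step~1), it assumes an equilibrium lottery with $\eta^a(m)>1/2$ and exhibits a profitable deviation in each of three exhaustive cases -- a positively weighted $b$-preferer voting $a$ switches her vote; a $b$-preferer with a suboptimal nomination swaps a nominee; or, if neither exists, an $a$-preferer re-nominates a zero-weight $b$-preferer who votes $a$ -- thereby pinning the equilibrium outcome to the equal-weight lottery without characterizing equilibrium votes. If you want to keep your truthfulness-based route, you must redo the nomination-counting for arbitrary profiles of insincere votes; otherwise, adopt the paper's direct exclusion of unequal lotteries.
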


\begin{proof}
    Remark first that for any profile $R$ where $Maj(R)$ is uniquely defined, the logic of the proof of Proposition \ref{theorem:CFimpl} applies verbatim. The only case that remains is the one when $Maj(R)=A$. We show that the unique equilibrium outcome in such case is a lottery that selects each of the options with equal probability.

    \textbf{Step 1: no bloc can be formed in equilibrium.} Consider some equilibrium $m$ and assume $B^{m} \neq \emptyset$. W.l.o.g. assume that a bloc is formed in favor of $a$. Then according to Lemma \ref{prop:effective} the effective bloc $B^*$ exists. Since $|B^*| \geq p+1$ there is some $i \in B^*$ such that $b R_i a$. Then according to Lemma \ref{prop:deveffective}, agent $i$ has a deviation $m'_i=(b,c_i)$ which breaks all the blocs in $B^{m}$ and thus switches the outcome to a lottery. Thus, the deviation is profitable for agent $i$, showing that $m$ is not an equilibrium.

\textbf{Step 2: no lottery which assigns higher probability to one of the options can be an equilibrium outcome.} By contradiction assume that a lottery $\eta(m)$ is an outcome for some equilibrium profile $m$ with $\eta^a(m)>1/2$. That means $\sum_{i\in I}\eta_{i}(m) \mathbbm{1}\{v_i=a\}>\sum_{i\in I}\eta_{i}(m) \mathbbm{1}\{v_i=b\}$, that is the weights of $a$-agents exceed the weights of $b$-agents. Then one of the following statements is true:

- There is some agent $i$ with $b R_i a$ such that $\eta_i(m)>0$ and $v_i=a$. In this case agent $i$ has a profitable deviation $m'_i=(b,c_i)$;

- There is some agent $i$ with $b R_i a$ such that $\exists j \in c_i$ with $v_j=a$ and $\exists h \notin c_i$ with $v_h=b$. That is there is an agent who prefers $b$ to $a$ but votes for $a$-agent when a $b$-agent is available. In this case agent $i$ has a profitable deviation $m'_i=(b, c_i \setminus \{j\} \cup \{h\})$.

- None of the above holds, but there is some agent $i$ with $b R_i a$ with $\eta_i(m)=0$ and $v_i=a$. Consider some agent $j$ with $a R_j b$. Since there is no agent who prefers $b$ to $a$, votes for $a$ and has positive weight it must be that there is $h \in c_j$ with $v_h=b$ (since $|c_j|=p$). Then agent $j$ has a profitable deviation $m'_j=(a,c_j \setminus \{h\} \cup \{i\})$ contradicting that $m$ is an equilibrium.
\end{proof}

\section{A sequential mechanism  \label{sec:majorityform}}

This section presents the Majority voting mechanism with Random confirmations (RC mechanism). This mechanism combines the features of the standard simultaneous and sequential majority procedures; this combination reduces the length of all equilibria compared to the sequential majority voting (with the shortest equilibrium including just 2 stages) while keeping the equilibrium outcome unique (in contrast to the one-shot majority voting).

The rules of the RC mechanism are as follows.

\noindent \textit{Voting stage:} Each agent $i$ votes for an option $v_i \in A$.

The profile of votes $v$ is publicly announced.  We can also relax this step of the mechanism by revealing only the shares of votes for each option which can be more suitable for practical use. We discuss this relaxation in Appendix C. \color{black}

\noindent \textit{Outcome of the Voting stage:} The option with most votes in $v$ is denoted the winner of the Voting stage.

\noindent \textit{Confirmation stage:} 
 A subset of agents of size $p+1$ is randomly chosen and ordered through a uniform draw. We denote the order by $\pi=(\pi_1, \ldots, \pi_{p+1})$. At each stage $t \in {\{1,\ldots,p+1\}}$, agent $\pi_t$ announces $Y$ or $N$. 


\noindent \textit{Outcome of the Confirmation stage::} 

Whenever an agent $\pi_t$ announces $Y$, the game ends, the outcome being the winner of the Voting stage. If no agent announces $Y$, this means that  all agents in $\{\pi_1, \ldots, \pi_{p+1}\}$ announce $N$. In this case, the outcome is the lottery $\beta(v)$ that assigns to each option its share of Voting stage votes, so that
$$\beta_a(v)=\frac{|\{i \in I : v_i=a\}|}{n} \: \text{ and } \beta_b(v)=1-\beta_a(v).$$

The purpose of this lottery is to give incentives to agents to vote for their most preferred option. Notice that the Confirmation stage announcements $Y$ and $N$ can be viewed as agreement and disagreement with the Voting stage outcome respectively. That is, if some agent in $\{\pi_1, \ldots, \pi_{p+1}\}$ agrees with the outcome being the winner of the Voting stage, this option is the outcome. On the other hand, if no one agrees, the outcome is the previously mentioned lottery.

We now establish the implementation under complete and incomplete information and discuss the extension to abstention.

\subsection{Complete information }

To provide a better understanding of the mechanism, we first present an example that demonstrates its logic before proceeding with a formal argument that encompasses the whole argument.

\medskip

\noindent \textit{Example 1}: Consider a preference profile $R$ with agents $1,2,3$ preferring $a$ to $b$ and agents $4,5$ preferring $b$ to $a$ so that $Maj(R)=a$. If all agents vote $b$ in the Voting stage (i.e. $v_i=b$ for all $i \in I$), the outcome  is $b$ independently of the Confirmation stage votes. Notice that in case the outcome was determined by simultaneous majority, such profile $v$ would be an equilibrium selecting a minority-preferred option. We now demonstrate that under RC mechanism there is always a voter who has a profitable deviation given such Voting stage profile. Consider agent 1's deviation from $v_1=b$ to $v'_1=a$ so that $(v'_1,v_{-1})=(a,b,b,b,b)$. After this deviation, the outcome depends on the votes of agents in $\{\pi_1,\pi_2,\pi_3\}$ in the Confirmation stage. If either agent 4 or agent 5 is in this set, $b$ is the outcome since both agents prefer $b$ to $a$ and, thus, their best response is to announce $Y$. Otherwise, the set $\{\pi_1, \pi_2,\pi_{3}\}$ equals $\{ 1, 2,3\}$ up to a permutation. Notice that such order $\{\pi_1, \pi_2,\pi_{3}\}$ occurs with strictly positive probability. The unique best response of any of these agents is to announce $N$ if $Y$ was not announced before. Thus, in the unique SPE of the Confirmation stage, $N$ is announced by all three agents and the outcome of the mechanism is a lottery that selects $a$ with probability 1/5. 
Therefore, by deviating from  $v_1$ to $v'_1$, agent 1 induces a lottery that assigns $a$ a positive probability; by \textbf{SD}, agent 1 prefers to deviate showing that any strategy profile in which $v_i=b$ for every agent $i$ cannot be an equilibrium.

Table \ref{tab:MF} illustrates this example. The left part represents the Voting stage profiles: unanimous in the first case, and after the deviation of agent 1 afterwards. The right part illustrates the SPE outcome of the Confirmation stage given the Voting stage profiles and the set $\{\pi_1, \pi_2,\pi_{3}\}$.

\begin{table}[h]
    \centering
    \begin{tabular}{c|c|c|c|c||c|c|c|c|c||c|}
          \multicolumn{5}{c||}{Voting stage} & \multicolumn{5}{c||}{Confirmation stage} & Outcome\\
         \hline
 1 & 2 & 3 & 4 & 5 & 1 & 2 & 3 & 4 & 5 & \\
        \hline
        \hline
        \multicolumn{11}{c}{\textbf{Unanimous vote for} $b$, $\{\pi_1,\pi_2,\pi_3\}=\{1,2,3\}$}\\
        \hline
         $b$ & $b$ & $b$ & $b$ & $b$ & $N$ & $N$ & $N$ & - & - & $b$ \\
         \hline
        \hline
        \multicolumn{11}{c}{\textbf{Deviation to $v'_1=a$}, $\{\pi_1,\pi_2,\pi_3\}=\{1,2,5\}$}\\
        \hline
         \textbf{a} \color{black} & $b$ & $b$ & $b$ & $b$ & $N$ & $N$ & - & - & $Y$ & $b$ \\
          \hline
        \hline
        \multicolumn{11}{c}{\textbf{Deviation to $v'_1=a$}, $\{\pi_1,\pi_2,\pi_3\}=\{1,2,3\}$}\\
        \hline
         \textbf{a} \color{black} & $b$ & $b$ & $b$ & $b$ & $N$ & $N$ & $N$ & - & - & $1/5a+4/5b$
    \end{tabular}
    \caption{Majority voting with Random confirmations}
    \label{tab:MF}
\end{table}
\color{black}

\color{black}

A similar logic to the one described in the example shows that at least $p+1$ agents who prefer the majority option are sincere in the Voting stage of \textit{any} equilibrium which leads to the implementation result, stated formally as follows. 
While the proof here is written with an odd number of agents, it can be extended to situations with an even number of them by properly modifying the mechanism (as detailed in appendix \ref{section:rceven}).

\begin{proposition}\label{pro:MVRCcom}
The RC mechanism subgame perfect implements the majority rule.
\end{proposition}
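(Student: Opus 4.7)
The plan is to analyze the mechanism by backward induction through its two stages, fixing a preference profile $R$ and assuming without loss of generality that $Maj(R)=a$, so that the number $n_a^R$ of agents preferring $a$ satisfies $n_a^R\geq p+1$.

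First I solve the confirmation subgame for every voting profile $v$. Let $w(v)$ denote the winner of the voting stage. By backward induction over $t$, selected agent $\pi_t$ announces $Y$ iff she weakly prefers $w(v)$ to the continuation outcome; when $v$ is not unanimous, $\beta(v)$ has full support, and this reduces to preferring $w(v)$ to the other option. Hence the subgame ends at the first selected agent favoring $w(v)$ with that option as outcome, and otherwise yields the lottery $\beta(v)$. Taking expectations over the uniform draw of $\{\pi_1,\ldots,\pi_{p+1}\}$, and letting $P(v)$ denote the probability that all $p+1$ selected agents prefer the option distinct from $w(v)$, the probability that $a$ is the final outcome equals $1$ when $w(v)=a$ (since $n_b^R\leq p$ forces $P(v)=0$) and equals $P(v)\beta_a(v)$ when $w(v)=b$, with $P(v)=\binom{n_a^R}{p+1}/\binom{n}{p+1}>0$.

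Second, I show that any SPE must satisfy $w(v)=a$. Suppose for contradiction that $v$ is an equilibrium with $w(v)=b$, so $n_b(v)\geq p+1$; since $n_a^R+n_b(v)\geq 2p+2>n$, pigeonhole yields an agent $i$ with $a R_i b$ and $v_i=b$. I consider $i$'s deviation to $v'_i=a$. If $n_a(v)=p$, the deviation switches the winner to $a$ and yields $a$ with probability $1$, strictly greater than the pre-deviation probability $P(v)\beta_a(v)<1$. If instead $n_a(v)<p$, the winner remains $b$, so $P(v')=P(v)>0$ while $\beta_a(v')=\beta_a(v)+1/n$, and the probability of $a$ therefore strictly increases. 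In either case stochastic dominance makes the deviation strictly profitable, contradicting equilibrium. Consequently every SPE has $w(v)=a$, and by Step 1 the outcome is $a=Maj(R)$.

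Finally, I verify existence through the sincere voting profile $v^*$ where each agent votes her preferred option. It yields $w(v^*)=a$ and outcome $a$; an agent preferring $a$ has no strict incentive to deviate to $b$, since this either preserves $w=a$ (when $n_a^R>p+1$) or switches the winner to $b$ and strictly lowers the probability of $a$ below $1$; an agent preferring $b$ cannot raise the probability of $b$ by deviating to $a$, which only reinforces $w=a$. The main obstacle is the accounting in the losing-winner case of Step 2: the contradiction hinges on observing that $P(v)$ depends on $v$ only through the identity of $w(v)$, so any deviation that preserves the winning option leaves $P$ unchanged, and the one-unit increase in $\beta_a$ translates directly into a strict gain for the deviator.
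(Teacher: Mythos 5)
Your proposal is correct and follows essentially the same route as the paper: you first characterize the SPE outcome of the Confirmation stage (this is the paper's Lemma \ref{lemma:RCcompl}), then rule out any equilibrium in which $a$ receives fewer than $p+1$ votes via a deviation by a majority agent voting $b$, using the positive probability that all $p+1$ selected confirmers are majority agents together with the strict increase of $\beta_a$ in the votes for $a$. The only differences are cosmetic: you make the existence of the sincere equilibrium and the closed form of the selection probability explicit, and your claim that the confirmation subgame ends at the \emph{first} selected supporter of the winner is slightly stronger than needed (indifferent agents may announce $N$), but the outcome characterization you actually use is exactly the one in the paper.
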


\begin{proof}
We start solving the game backwards from the Confirmation stage. If the profile $v$ is unanimous, the Confirmation stage does not affect the outcome. However, if $v$ is not unanimous, we denote by $x$ the winning option of the Voting stage and by $y$ the remaining option.

Next we consider agent $\pi_{p+1}$ and assume that no agent from $\{\pi_1, \ldots, \pi_{p}\}$ announced $Y$. If $xR_{\pi_{p+1}}y$, the unique best response for agent $\pi_{p+1}$ is $Y$; otherwise, it is $N$.

Moving on to agent $\pi_{p}$ and assuming no $Y$ was announced before, the best response for $\pi_{p}$ is:

- $Y$ if $xR_{\pi_p}y$ and  $yR_{\pi_{p+1}}x$;

- $N$ if $yR_{\pi_p}x$ and  $yR_{\pi_{p+1}}x$;

- $\{Y,N\}$ otherwise.

This logic can be extended to earlier agents in $\pi$ in the following way. For 
any agent $\pi_i$ with $i \leq p+1$ the best response in the Confirmation stage is:

- $Y$ if $xR_{\pi_i}y$ and  $yR_{\pi_j}x$ for all $i<j\leq p+1$;

- $N$ if $yR_{\pi_j}x$ for all $i \leq j\leq p+1$;

-$\{Y,N\}$ otherwise.

Then the SPE outcome of the Confirmation stage is the following one.

\begin{lemma}\label{lemma:RCcompl}
    For any non-unanimous profile $v$ of the Voting stage with $x$ being the winner, the SPE outcome of the Confirmation stage is:

- $x$ if $x R_i y$ for some $i \in \{\pi_1,\ldots, \pi_{p+1}\}$;

- the lottery $\beta(v)$ otherwise.
\end{lemma}













Consider the Voting stage assuming that $Maj(R)=a$ with $b$ being the minority preferred option. 
Remark that if $a$ wins the Voting stage, the equilibrium outcome is $a$. This follows from Lemma \ref{lemma:RCcompl} and the fact that for any order $\pi$, the subset $\{\pi_1, \ldots, \pi_{p+1}\}$ includes some agent who prefers $a$ to $b$.

We claim that in any equilibrium the score of $a$ in the profile $v$ is greater than or equal to $p+1$. Suppose, by contradiction, that this is not the case, and there are fewer than $p+1$ votes in favor of $a$ in $v$. Then there are only two possible cases in which $b$ wins with positive probability.

\noindent \textbf{Case 1:} $v$ is unanimous in favor of $b$. This logic of this case is analogous to  the one in Example 1. Consider some agent $i$ with $a R_i b$. If she deviates to $v'_i=a$, then if $\{\pi_1, \ldots, \pi_{p+1}\}$ contains only majority agents, the outcome is a lottery according to Lemma \ref{lemma:RCcompl} and it assigns positive probability to $a$. Thus, such deviation is profitable for agent $i$. It follows that the unanimous profile $v$ in favor of $b$ is not an equilibrium.

\noindent \textbf{Case 2:} There are some votes for $a$ in profile $v$, but less than $p+1$. This implies that there is some agent $i$ with $v_i=b$ while $aR_i b$. For this agent, deviating to $v'_i=a$ is profitable. Indeed, if after such deviation, $a$ is the winner of the Voting stage, then $a$ is the outcome. 
If this is not the case, $b$ is the winner of $(v'_i,v_{-i})$. By Lemma \ref{lemma:RCcompl}, if $\{\pi_1,\ldots,\pi_{p+1}\}$ includes only majority agents, which occurs with strictly positive probability, the outcome is the lottery $\beta(v'_i,v_{-i})$. Deviation by $i$ to $v'_i=a$ increases the probability of $a$ being selected in the lottery: 

    $$\beta_a(v)=\frac{|\{i\in I : v_i=a\}|}{n}<\frac{|\{i\in I : v_i=a\}|+1}{n}=\beta_a(v'_i,v_{-i}).$$
Therefore, due to \textbf{SD}, the agent $i$ finds profitable to deviate to $v'_i$.

Thus, we have eliminated all profiles $v$ in which less than $p+1$  agents vote $a$ as potential equilibria and this completes the proof.
\end{proof}

\subsection{Incomplete information }

We now prove that the RC mechanism implements the majority rule when we relax the assumption of complete information. It is assumed that each agent knows her own preference over the options and has some beliefs over the preferences of other agents.
For simplicity of the argument we assume that each agent believes that the preferences of other agents are $i.i.d.$ and assigns probability $q_a$ (resp. $1-q_a$) to each agent preferring $a$ to $b$ (resp. $b$ to $a$). Later we show that the $i.i.d.$ assumption can be relaxed without affecting the result. A strategy for an agent $i$ is a mapping $\sigma_i=(\sigma^1_i,\sigma^2_i)$ where $\sigma^1_i:\mathcal{R}\rightarrow A$ and $\sigma^2_i:\mathcal{R} \times H_{i} \rightarrow \{Y,N\}$ stand for the strategies in each of the stages with $h_1=v$ -the Voting stage profile, and $h_t \in \mathcal{R} \times \{Y,N\}^{t-1}$ for all $t \in \{2,\ldots,p+1\}$. I do not understand this notation, sorry. Where do we use it? This might help. We denote the conditional beliefs of agent $i$ about preferences of agent $j$ by $\mu_i(R_{j} \mid h_{i})$ with $\mu_i(aR_jb\mid \emptyset)=q_a$. A first-stage vote of agent $i$, $v^1_i$, is revealing given strategy $\sigma^1_i$ if $\mu_j(a R_i b \mid v^1_i)=1$ if $a R_i b$ and $\mu_j(a R_i b \mid v^1_i)=0$ otherwise for any agent $j \neq i$, that is if the preference of $i$ is uniquely determined given her first-stage vote.

\begin{proposition}\label{pro:MVRCincom}
    Under incomplete information, the RC mechanism implements the majority rule in PBE.
\end{proposition}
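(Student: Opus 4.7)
The proof mirrors Proposition~\ref{pro:MVRCcom} while tracking Bayesian updates, and splits into existence of a PBE selecting $Maj(R)$ and necessity that every PBE outcome equals $Maj(R)$. For existence, I propose the candidate PBE where each agent votes for her preferred option in the Voting stage and announces $Y$ in the Confirmation stage iff the Voting-stage winner is her preferred option. Since first-stage strategies are revealing, Bayes' rule pins down the realized profile $R$ once $v$ is observed, and the Confirmation stage reduces to the complete-information analysis of Lemma~\ref{lemma:RCcompl}. Under sincere voting, $Maj(R)$ obtains at least $p+1$ votes, so by pigeonhole $\{\pi_1,\ldots,\pi_{p+1}\}$ contains at least one agent who prefers $Maj(R)$ and the outcome is $Maj(R)$; absence of profitable one-shot deviations is verified exactly as in the complete-information proof.

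For necessity, the linchpin is a belief-free observation in the Confirmation stage. At any on-path history an agent $\pi_t$ whose preferred option is the current Voting-stage winner $x$ weakly prefers $Y$ over $N$, because $Y$ yields $x$ for sure while $N$ yields a convex combination of $x$ and $\beta(v)$. Conversely, an agent preferring $y$ strictly prefers $N$ as soon as she assigns positive probability to all later callers also playing $N$; this propagates by backward induction from $\pi_{p+1}$, for whom $N$ is strictly dominant when $\beta_y(v)>0$, using the full-support iid beliefs at earlier stages. Hence in every PBE the Confirmation stage selects $x$ whenever $\{\pi_1,\ldots,\pi_{p+1}\}$ contains an agent preferring $x$, and the lottery $\beta(v)$ otherwise, reproducing the conclusion of Lemma~\ref{lemma:RCcompl} without invoking complete information.

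I then port the Voting-stage argument of Proposition~\ref{pro:MVRCcom} to the Bayesian setting. Suppose by contradiction that some PBE gives positive probability to the minority option. Then with positive probability under $R$ the realized profile $v$ has fewer than $p+1$ votes for $Maj(R)$, so some majority agent $i$ plays $\sigma^1_i\neq Maj(R)$. Consider her deviation to $v'_i=Maj(R)$, holding her Confirmation-stage strategy fixed. Conditional on any realization $(R_{-i},\pi)$, the deviation either switches the Voting-stage winner to $Maj(R)$, in which case the dominance step from the second paragraph yields $Maj(R)$ for sure, or leaves the winner unchanged, in which case the lottery $\beta(v'_i,v_{-i})$ assigns strictly higher weight to $Maj(R)$ by $1/n$; taking expectations over $i$'s full-support beliefs and the random ordering, and noting that the event triggering the lottery has strictly positive probability, \textbf{SD} delivers a strict gain, contradicting the PBE property. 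The main obstacle is precisely that under arbitrary first-stage strategies the posterior on types after observing $v$ is not degenerate, so Lemma~\ref{lemma:RCcompl} cannot be invoked directly; the belief-free dominance step bypasses this obstruction, and because the argument only uses full support of beliefs, the relaxation of the iid assumption announced after the statement follows with no change.
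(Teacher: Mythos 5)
Your Confirmation-stage analysis is essentially the paper's own (its Lemma \ref{lemma:RCincompl}): full-support or correctly revealing beliefs make agents who prefer the Voting-stage loser strictly prefer $N$ by backward induction, and the selected agent who prefers the winner $x$ confirms, so the continuation outcome is $x$ iff some agent in $\{\pi_1,\ldots,\pi_{p+1}\}$ prefers $x$ and $\beta(v)$ otherwise; note only that your ``weakly prefers $Y$'' must be upgraded to a strict preference for the \emph{last} selected agent preferring $x$ (she assigns positive probability to all her successors preferring $y$, hence to their playing $N$), otherwise the ``hence'' does not follow. Where you genuinely diverge is the Voting stage. The paper proves the stronger claim that in every PBE each agent strictly prefers to vote sincerely: it shows an agent could be indifferent only if all other agents vote sincerely (by comparing two profiles $R_{-i}$ and $R'_{-i}$ that differ in one agent $j$, using that first-stage strategies condition only on own preferences), and then restores strictness through the positive-probability event of a $p$--$p$ split, which is exactly where its belief conditions ($T_a$ and positive probability of unanimous opposing successors) enter and why its relaxation of the i.i.d.\ assumption is stated as it is. You instead argue by contradiction directly from a minority-favorable outcome, exhibiting a profitable deviation to sincerity for some insincere majority agent -- a shorter route, closer to the complete-information proof of Proposition \ref{pro:MVRCcom}, which only needs full support of the true profile and of the ``all selected agents prefer the majority option'' event. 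Two small repairs are needed for your deviation step: your either/or omits the realizations in which the deviation leaves the minority option as the winner and some selected agent prefers it, where the outcome is that option both before and after (the deviation is neutral, not lottery-improving); and the argument requires checking that the deviation is weakly improving in \emph{every} realization consistent with $i$'s information (including those where the winner is already her preferred option or where she is in fact in the minority of the realized profile), since only then does taking expectations over her beliefs and the random order turn the strictly better positive-probability events into a strict \textbf{SD} gain. With those points made explicit, your argument is correct and delivers implementation, though it does not yield the paper's stronger conclusion that all equilibria are fully sincere in the Voting stage.
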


\begin{proof}
Fix any strategy profile $\sigma^1$, vote profile $v$ and order $\pi$ of agents. Denote by $x$ the winner of the Voting stage given $v$ and by $y$ the remaining option. Recall that in the Confirmation stage, an agent votes if and only if all previous votes are $N$, since otherwise (whenever a player announces $Y$) the game ends. 
 
The best response for the last mover, denoted agent $\pi_{p+1}$ is to vote $Y$ if $x R_{\pi_{p+1}} y$ and $N$ otherwise.

The best response of any agent $\pi_{t}$ with $t<p+1$ in the confirmation stage is as follows:

- $Y$ if $x R_{\pi_{t}} y$ and $\mu_{\pi_t}(y R_{\pi_{t+1}} x,\ldots,y R_{\pi_{p+1}} x) \mid h_{\pi_t})>0$, that is, agent $\pi_t$ assigns positive probability to the event that all agents in $\{\pi_{t+1}, \ldots, \pi_{p+1}\}$ prefer $y$ to $x$ (all successors have opposing preference);

- $N$ if $y R_{\pi_{t}} x$ and $\mu_{\pi_t}(y R_{\pi_{t+1}} x,\ldots,y R_{\pi_{p+1}} x) \mid h_{\pi_t})>0$, that is, agent $\pi_t$ assigns positive probability to the event that all successors have the same preference;

- $\{Y,N\}$ otherwise.

Note that based on our assumption on prior beliefs, if for some agent $\pi_t$, \break $\mu_{\pi_t}(y R_{\pi_{t+1}} x,\ldots,y R_{\pi_{p+1}} x) \mid h_{\pi_t})=0$ then, the same applies to the rest of the other agents. More precisely, this occurs only if some agent in $\{\pi_{t+1}, \ldots, \pi_{p+1}\}$ prefers $x$ to $y$ and her Voting stage strategy was revealing. It follows that, in case of non-revealing strategies for agents in $\{\pi_{t+1},\ldots,\pi_{p+1}\}$, agent $\pi_t$ strictly prefers to be truthful, i.e. to announce $Y$ if the winner of $v$ is her preferred option and $N$ otherwise. Thus, we can summarize the outcome of the Confirmation stage as follows. 


\begin{lemma}\label{lemma:RCincompl}
    For any strategy profile $\sigma^1$ and Voting stage profile $v$, the PBE outcome of the Confirmation stage is:

    - $x$ if $v_i=x$ for all $i \in I$,

    - $x$ if some agent in $\{\pi_1, \ldots, \pi_{p+1}\}$ prefers the winner at $v$,

    - the lottery $\beta(v)$ otherwise.
    
    \end{lemma}

    Consider the Voting stage of the mechanism, an agent $i \in I$ and an arbitrary profile $\sigma^1_{-i}$.  We now show that $i$ strictly prefers to vote for her most preferred option.

  Assume w.l.o.g. that $a R_i b$. Notice that there is no $v_{-i}$ for which $i$ strictly prefers to vote $b$. However, voting $b$ may be a best response if $i$ is indifferent between voting $a$ or voting $b$ for all possible realizations of preferences $R_{-i}$ and votes $v_{-i}$ of other agents given strategies $\sigma^1_{-i}$.

 Consider some realization $R_{-i}$ and $v_{-i}$ where $Maj(R)$ is not the unique option getting the most votes in $v_{-i}$ . In this case, $i$ strictly prefers to vote $a$ independently of whether  $Maj(R)=a$ or  $Maj(R)=b$. Indeed, when $i$'s vote is the $p+1^{th}$ in favor of $a$, by voting $a$ rather than $b$, she induces the outcome to be $a$ rather than a lottery (if $Maj(R)=a$) or a lottery which selects $a$ with higher probability under $v_i=a$ than $v_i=b$ (if $Maj(R)=b$).

 If for some strategy profile $\sigma^1_{-i}$, the option $Maj(R)$ gets the most votes in $v_{-i}$ for all realizations $R_{-i}$ and $v_{-i}$ that occur with positive probability, voting $b$ is a best response for $i$. We now show that no such strategy profile exists.

    Consider some profile $R_{-i}$ such that $p-1$ agents prefer $a$ to $b$ and the remaining $p+1$ agents prefer $b$ to $a$ so that $Maj(R)=b$. The profile $R_{-i}$ occurs with positive probability by assumption. Moreover, given $\sigma^1_{-i}$, any realization of $v_{-i}$ is such that the majority of agents in $I \setminus i$ vote for $b$ (since $Maj(R)=b$). Now consider a different profile $R'_{-i}$ such that $R_h=R'_h$ for all $h \in I \setminus \{i,j\}$ with $bR_ja$ and $aR'_jb$. That is, the profile $R'$ is such that $Maj(R)=a$ (since $a R_i b$) and the only difference with $R_{-i}$ is the preference of agent $j$. By assumption, in the profile $R'_{-i}$ for any realization $v_{-i}$ the majority of agents in $I \setminus \{i\}$ vote $a$. Notice, however, that each agent can condition her strategy only on her preference since this is the information available to agents in the Voting stage. Thus, for all agents in $I \setminus \{i,j\}$ the probability to vote for $a$ or $b$ remains the same when moving from $R_{-i}$ to $R'_{-i}$. Thus, the only change in votes occurs for agent $j$. Assume that either in $R_{-i}$ with $bR_ja$ or in $R'_{-i}$ with $aR'_jb$ agent $j$ randomizes, i.e. votes for $a$ and $b$ with positive probability. In this case, there is some profile $v_{-i}$ which occurs with positive probability under $R_{-i}$ and $R'_{-i}$. However, this contradicts the assumption that for any realization of $v_{-i}$  the majority preferred option obtains the majority of the votes. Thus, agent $j$ votes $a$ when $aR_jb$ and $b$ when $bR_ja$: she votes sincerely. Notice that agent $j$ was random so that the same logic applies to any agent in $I \setminus \{i\}$. Thus, $i$ is indifferent between voting $a$ or $b$ only if all other agents are truthful.

    Consider now a profile $R_{-i}$ such that exactly $p$ agents prefer $a$ to $b$ and $p$ remaining agents prefer $b$ to $a$. Since they are truthful there are $p$ votes for $a$ and $p$ votes for $b$. In this case if $v_i=a$ the outcome is $a$ as prescribed by Lemma \ref{lemma:RCincompl} whereas if $v_i=b$, the outcome is a lottery which assigns positive probability to $b$. Thus, $i$ strictly prefers to be truthful. This completes the proof.
    \color{black}
    

\end{proof}

Notice that our initial assumption on prior beliefs being $i.i.d.$ was unnecessarily demanding. If the prior beliefs satisfy the following weaker conditions, the result remains valid:

- Each agent assigns a positive probability to the event $T_a$ (the event where $p$
agents other than $i$ prefer $a$ to $b$ and $p$ agents prefer $b$ to $a$);

- For any subset of agents $I' \subset I$ and any agent $j \in I \setminus I'$, for any $x,y \in A$, $x \neq y$, $\mu_j(\forall i \in I', x R_i y \mid y R_j x)>0$. 

 The first assumption implies that if the rest of agents vote truthfully, an agent believes she is pivotal with positive probability since there might be exactly $p$ voters of each type. The second assumption ensures that at least some agent will not be indifferent between voting $Y$ and $N$ if her preferred alternative is the winner of the Voting stage independently of the strategy profile $\sigma_{-i}$.

\subsection{Abstention}

We now discuss an extension of the RC mechanism where we allow the agents to abstain. To distinguish from the original mechanism we call it RC mechanism with abstention. The possibility of abstention makes the strategic problem richer. Indeed, the abstention of many majority  agents can induce the victory of the minority and make agents indifferent between abstaining or voting for any of the options.\footnote{If one considers the simultaneous majority mechanism with abstention, there is a plethora of equilibria where the turnout of minority agents is larger than the one of majority agents and the minority preferred option wins.} In order to deal with abstention, we extend the definition of the mechanism as follows.

\vspace{.2cm}

\noindent \textit{Voting stage:} Each agent $i$ votes for an option or abstains, that is $v_i \in A \cup \{abs \}$. The profile of votes $v$ is publicly announced.  We call the option which gets most votes in $v$ the winner of the Voting stage. If no agent participates, i.e. $v_i=abs$ for all $i \in I$, or if the number of votes for $a$ is equal to the number of votes for $b$, the outcome is a lottery which assigns probability 1/2 to each of the options.

\noindent \textit{Confirmation stage:}

 A subset of agents of size $p+1$ is randomly chosen and ordered through a uniform draw (from all the agents independent of whether they participated or abstained in the Voting stage). We denote the order by $\pi=(\pi_1, \ldots, \pi_n)$. At each stage $t \in \{1,\ldots,p+1\}$, agent $\pi_t$ announces $Y$ or $N$ as long as $Y$ was not announced before.

\noindent \textit{Outcome:} 

If the two options are tied in the Voting stage (including the case in which all agents abstain) the outcome is the lottery which assigns probability 1/2 to each option. If there is a single winner in the Voting stage and some agent $\pi_t$ announces $Y$, the game ends, the outcome being the winner of the Voting stage. Finally, if all agents in $\{\pi_1, \ldots, \pi_{p+1}\}$ announce $N$ or abstain in the Confirmation stage the outcome is a lottery $\beta(v)$, which assigns to each option its share of first-stage votes, so that:
$$\beta_a(v)=\frac{|\{i \in I : v_i=a\}|}{|\{i \in I : v_i = a\}|+|\{i \in I : v_i = b\}|} \: \text{ and } \: \beta_b(v)=1-\beta_a(v).$$
\begin{proposition}\label{prop:RCabst}
The RC mechanism with abstention subgame perfect implements the majority rule in the presence of abstention.
\end{proposition}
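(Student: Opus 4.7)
The plan is to mimic the structure of the proof of Proposition \ref{pro:MVRCcom} by solving the extensive game backwards. Let $R$ be fixed and, without loss of generality, assume $\mathit{Maj}(R)=a$, so at least $p+1$ agents prefer $a$ to $b$ and at most $p$ agents prefer $b$ to $a$.

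First I would establish the Confirmation-stage analogue of Lemma \ref{lemma:RCcompl}. Fix a Voting-stage profile $v$. If $v$ is tied (including the all-abstain case) the outcome is mechanically $\tfrac{1}{2}a+\tfrac{1}{2}b$. If $v$ has a unique winner $x$, the backward-induction argument of Proposition \ref{pro:MVRCcom} applies verbatim (abstention in the Confirmation stage is strategically equivalent to announcing $N$, since both feed into the same lottery): the last mover $\pi_{p+1}$ plays $Y$ iff $xR_{\pi_{p+1}}y$, and inductively each $\pi_t$ with $xR_{\pi_t}y$ plays $Y$ iff some successor also prefers $x$. The resulting SPE outcome of the Confirmation stage is $x$ whenever some agent in $\{\pi_1,\dots,\pi_{p+1}\}$ prefers $x$, and the lottery $\beta(v)$ otherwise. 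The crucial observation is that any subset of size $p+1$ drawn from $I$ must contain at least one agent who prefers $a$, because there are at most $p$ agents who prefer $b$.

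Next I would argue existence and characterize equilibrium outcomes. The profile in which every agent votes sincerely is an SPE: at $v$ the option $a$ wins the Voting stage, so by the previous paragraph the outcome is $a$, and no deviation can do better for an $a$-preferrer. Conversely, suppose toward a contradiction that some SPE induces an outcome distinct from $a$. By the Confirmation-stage lemma, the induced Voting-stage profile $v$ cannot have $a$ as the unique winner, so one of three cases holds: (i) $b$ is the unique winner of $v$, (ii) $a$ and $b$ are tied in $v$ with at least one vote each, or (iii) everyone abstains. In every case the number of $a$-votes is at most $p$, so since at least $p+1$ agents prefer $a$, there exists an agent $i$ with $aR_ib$ and $v_i\ne a$. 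I would show that the unilateral deviation $v'_i=a$ strictly raises the probability that $a$ is the final outcome, hence is profitable under \textbf{SD}.

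To see this last step, compare the induced distribution over $\{a,b\}$ at $v$ and at $(v'_i,v_{-i})$ by case. In case (iii), the deviation makes $a$ the unique winner and (by the above) yields $a$ for sure, strictly better than the $\tfrac{1}{2}$-lottery. In case (ii), the deviation makes $a$ the unique winner, again yielding $a$ for sure, which strictly improves on $\tfrac{1}{2}$. In case (i), after the deviation either $a$ becomes the unique winner (outcome $a$, strictly better than anything with probability $<1$ on $a$), or $a$ and $b$ tie (outcome $\tfrac{1}{2}a+\tfrac{1}{2}b$, strictly better because $\beta_a(v)<\tfrac{1}{2}$ and with positive probability the original outcome was $b$), or $b$ still wins and then $\beta_a(v'_i,v_{-i})>\beta_a(v)$ while with positive probability $\{\pi_1,\dots,\pi_{p+1}\}$ consists only of $a$-preferrers so that $\beta$ is actually realized. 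The main obstacle I anticipate is this third sub-case: I need to verify carefully that the lottery $\beta$ is reached with positive probability after the deviation (it is, because the uniform draw of $\pi$ assigns positive probability to the event that $\{\pi_1,\dots,\pi_{p+1}\}$ is contained in the set of $a$-preferrers, which has size $\geq p+1$), so that by \textbf{SD} the strict inequality $\beta_a(v'_i,v_{-i})>\beta_a(v)$ translates into a strict gain for $i$. This contradicts the assumed equilibrium and completes the proof.
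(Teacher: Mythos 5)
Your plan is correct and follows essentially the same route as the paper's proof: first pin down the Confirmation-stage SPE outcome (the analogue of Lemma \ref{lemma:RCcompl}, with abstention treated as $N$ and ties resolved mechanically by the $\tfrac{1}{2}$-$\tfrac{1}{2}$ lottery), then rule out any equilibrium in which $a$ is not the unique Voting-stage winner by letting a majority agent with $v_i\neq a$ deviate to $a$, exploiting the positive probability that only $a$-preferrers are drawn and the strict increase of $\beta_a$. Your case split by the shape of $v$ ($b$ wins / tie / all abstain) is just a reorganization of the paper's cases (deterministic $b$ versus $b$ with positive probability, deviator voting $b$ versus abstaining), so no substantive difference.
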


The proof of the result can be found in the Appendix.
As a final comment on this mechanism, observe the existence of the following equilibrium. Consider a strategy profile in which only 2 agents vote in the Voting stage and both vote for the majority preferred option, and all agents vote $Y$ in the Confirmation stage whenever asked to vote. This is a subgame perfect equilibrium for any preference profile where the outcome is reached after only three votes and two steps. Indeed, no deviation is possible in the Voting stage since the winner is not altered by adding or substracting one vote. In the Confirmation stage, there is always a majority agent among the ones taking part so that she strictly prefers to announce $Y$. Thus, only two initial votes suffice to guarantee that the majority option is elected.

 \subsection{Supermajorities and RC mechanisms}

We now consider a final extension of the baseline model where both options are not treated symmetrically. The set of options consists of a status quo $sq$ and an alternative policy $x$, so that $A=\{sq,x\}$. Consider the  supermajority voting rule $Maj_k$ with:

\begin{equation}
    Maj_k(R)=
    \begin{cases}
        x & |\{i \in I : x R_i sq\}| \geq p+k \text{ and },\\
        sq & \text{otherwise.}
    \end{cases}
\end{equation}

where $1\leq k \leq p+1$. That is, for the alternative policy $x$ to be selected it needs to be preferred to status quo by at least $p+k$ agents. Notice that in case $k=1$, the rule $Maj_k$ is the simple majority rule considered in the rest of the paper, whereas when $k=p+1$, $Maj_k$ is the unanimity rule.

For each supermajority rule $Maj_k$, we provide an extension of the RC mechanism, denoted RC$_k$, that implements it in subgame-perfect equilibria. Its formal definition follows:

\noindent \textit{Voting stage:} Each agent $i$ votes simultaneously for one of the options, $v_i \in A$. The profile of votes $v$ and the winner (based on the supermajority rule $Maj_k$) are publicly announced.

\noindent \textit{Confirmation stage:} A subset of agents of size $\bar{t}$ is chosen and ordered through a uniform draw with $\bar{t}=p+k$ if $sq$ is the winner of the first stage and $\bar{t}=p+2-k$ otherwise. An order $\pi$ of agents is randomly chosen through a uniform draw. At each stage $t \in \{1,\ldots,\bar{t}\}$ agent $\pi_t$  announces $Y$ or $N$.

As in the baseline model, the mechanism ends at stage $t \leq \bar{t}$ if agent $\pi_t$ announces $Y$, the winner of the first stage being the outcome. If all agents in $\{\pi_1,\ldots, \pi_{\bar{t}}\}$ announce $N$ the outcome is the lottery with weights $\beta_a(v)$ and $\beta_b(v)$ given by the share of Voting stage votes.

The main difference with the benchmark $RC$ mechanism is that the number of agents selected for the Confirmation stage varies as a function of the winner of the Voting stage. For instance, in the case of the unanimity rule ($k=p+1$), the $RC_{p+1}$ mechanism only requires one agent in the Confirmation stage if the status quo wins (to be certain that at least some agent prefers $x$ to $sq$) whereas it requires that all agents take part in the Confirmation stage when $x$ wins initially (to be sure that all agents indeed have $x$ as their preferred option). As we now show, this simple modification of the $RC$ mechanism suffices to implement any supermajority rule.

\begin{proposition}
For any $1\leq k \leq p$, the   $RC_k$ mechanism subgame perfect implements the supermajority voting rule $Maj_k$.
\end{proposition}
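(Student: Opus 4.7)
The plan is to adapt the proof of Proposition \ref{pro:MVRCcom} to the asymmetric thresholds of the $RC_k$ mechanism. First, by the same backward induction as in the baseline, for any Voting stage profile $v$ with winner $w$ and realized ordered subset $\pi=(\pi_1,\ldots,\pi_{\bar{t}})$, the unique SPE outcome of the Confirmation stage is $w$ if at least one $\pi_t$ prefers $w$, and the lottery $\beta(v)$ otherwise. This is the direct analog of Lemma \ref{lemma:RCcompl} and its proof carries over verbatim.

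The crux is a combinatorial observation keyed to the specific sizes of $\bar{t}$. Letting $n_x(R)=|\{i \in I : x R_i sq\}|$, when $sq$ wins the Voting stage the lottery can occur only if all $p+k$ selected agents prefer $x$, which requires $n_x(R)\geq p+k$, i.e., $Maj_k(R)=x$. Symmetrically, when $x$ wins, the lottery requires $n-n_x(R)\geq p+2-k$, i.e., $Maj_k(R)=sq$. Consequently, whenever the Voting stage winner coincides with $Maj_k(R)$ the Confirmation stage yields that winner with certainty, and whenever it disagrees the lottery arises with strictly positive probability over the draw of $\pi$.

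Given this, sincere voting in the Voting stage is an SPE whose outcome is $Maj_k(R)$, establishing existence. For uniqueness of the equilibrium outcome, assume without loss of generality that $Maj_k(R)=x$ (so $n_x(R)\geq p+k$ and the number of $sq$-preferrers is at most $p+1-k$), and suppose for contradiction that some equilibrium produces $sq$ as the Voting stage winner. Then $v_{sq}\geq p+2-k > p+1-k$, forcing at least one $x$-preferrer $i$ to have voted $sq$. Switching $i$'s vote to $x$ either raises the tally to $v_x=p+k$, making $x$ the new Voting stage winner and hence the outcome with certainty by the first step, or leaves $sq$ as the Voting stage winner but strictly increases the lottery weight on $x$ by $1/n$, while the lottery continues to arise with positive probability since $n_x(R)\geq p+k$. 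In either case the probability of $x$ strictly increases, contradicting equilibrium by \textbf{SD}. The case $Maj_k(R)=sq$ is symmetric with the roles of $x$ and $sq$ swapped.

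The delicate step is the second: the choices $\bar{t}=p+k$ and $\bar{t}=p+2-k$ are tight, being precisely the sizes that preclude the lottery whenever the Voting stage agrees with $Maj_k$ while allowing it with positive probability otherwise. Once this combinatorial matching is established, the Voting stage deviation argument proceeds exactly as in Proposition \ref{pro:MVRCcom}.
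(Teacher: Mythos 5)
Your proposal is correct and follows essentially the same route as the paper: the analog of Lemma \ref{lemma:RCcompl} for the Confirmation stage, the observation that $\bar{t}=p+k$ (resp.\ $p+2-k$) is exactly calibrated so the lottery can arise only when the Voting stage winner disagrees with $Maj_k(R)$, and then a profitable Voting stage deviation by a misvoting agent on the $Maj_k(R)$ side. Your explicit verification of existence via sincere voting is a small addition the paper leaves implicit, but the argument is the same.
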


\begin{proof}
   Notice that the logic of the Confirmation stage holds for any supermajority rule, thus, the result of Lemma \ref{lemma:RCcompl} applies. That is, in any profile $v$ with winner $x$, $x$ is the SPE outcome of the Confirmation stage if at least one agent in $\{\pi_1,\ldots,\pi_{\bar{t}}\}$ prefers $x$ to $y$ with $y\neq x$, and a lottery otherwise.

    Consider then the Voting stage. Assume first that some profile with $Maj_k(R)=sq$ (that is, less than $p+k$ agents prefer $x$ to $sq$) admits an equilibrium which selects $x$ with positive probability. Given the equilibrium outcome of the Confirmation stage discussed above it follows that in such equilibrium $x$ is the winner of the Voting stage. Thus, there is some agent $i$ such that $sq R_i x$ and $v_i=x$. Then, this agent has a profitable deviation since the set $\{\pi_1, \ldots, \pi_{\bar{t}}\}$ includes only agents who prefer $sq$ to $x$ with positive probability. Indeed, this holds since there at least $2p+1-(p+k-1)=p+2-k$ such agents  and $\bar{t}=p+2-k \leq p+k$ by definition. In this case, the outcome is a lottery that assigns higher probability to $sq$ under $(v'_i,v_{-i})$ than under the initial profile $v$.

    Notice that a symmetric logic applies if we consider a preference profile in which at least $p+k$ agents prefer $x$ to $sq$ and the existence of an equilibrium which selects $sq$ with positive probability, which completes the proof.
\end{proof}

\section{Conclusion \label{sec:conclusion}}

The main contribution of this work is the design 
of simple mechanisms that implement majority rule. The Bloc formation mechanism follows a different idea: agents need to avoid coordination problems by nominating each other and forming blocs.  The RC mechanism requests voters to confirm their vote after expressing a preference between two options. Additionally, our paper demonstrates that lotteries can be used to design mechanisms with fewer steps compared to existing methods, reinforcing the argument stated by \cite{abreu1991virtual} that lotteries can lead to more permissive implementation.

We acknowledge that future empirical research is needed to determine the welfare gains of the proposed mechanisms compared to traditional voting procedures. A natural extension of our work on the theoretical front would be to investigate the implementation of efficient rules subject to incentive compatibility, such as the weighted majority rules as characterized by \cite{azrieli2014pareto}. On the experimental front, we plan to explore the role of these mechanisms in participation games  (see \cite{xefteris2023participation}), as well as in other settings such as public good provision ones.

\vspace{.5cm}

\appendix

\noindent \textbf{\Large{Appendix}}
\section{Bloc formation mechanism}

Proof of Proposition \ref{theorem:CFimpl}.

 W.l.o.g. assume that any agent $i$ in $\{1,\ldots,p+1\}$ is such that $aR_i b$ so that $a$ is the majority-preferred option and $b$ the minority-preferred one. Any agent $i$ with $aR_i b$ is a majority agent. We need to prove that (A.) there is an equilibrium implementing $a$ and that (B.) any equilibrium selects $a$.

\vspace{.3cm}

\noindent A. Existence of an equilibrium selecting $a$. 

Consider the set $J=\{1,\ldots,p+1\}$ that consists only of majority agents. Take the strategy profile $m$ where for each $i\in J$, $v_i=a$ and $c_i\subset J\setminus \{i\}$ so that coalition $J$ forms a bloc in favor of $a$. It follows  that  $\chi_{BF}(m)=a$.
To see why $m$ is an equilibrium, remark that each agent in $J$ prefers $a$ to $b$ (and $a$ to any lottery with both $a$ and $b$ in its support by \textbf{SD}) and hence does not want to deviate. Each agent outside $J$ cannot affect the outcome since the bloc formed by $J$ is formed independently of the deviation of any agent outside $J$. This shows the existence of an equilibrium selecting $a$.

\vspace{.3cm}

\noindent B. Any equilibrium implements $a$.

\vspace{.2cm}

For the sake of clarity, we divide this part of the proof in two sections. In section B.1, we show that there is no bloc in favor of $b$ in equilibrium. In section B.2, we show that any strategy profile that leads to a full-support lottery cannot be an equilibrium, concluding the proof.

\vspace{.3cm}

\noindent B.1. No bloc in favor of  $b$ in equilibrium.

\vspace{.2cm}

Take any profile $m$ with a bloc $B$ in favor of $b$; hence $\chi_{BF}(m)=b$. The definition of a bloc means that at least $p+1$ agents vote for $b$ and nominate only agents in $B$. Consider the effective bloc $B^*$ which exists and is unique according to Lemma \ref{prop:effective}. Since $a$ is the majority option, there is some agent $i\in B^*$ with $v_i=b$ in the profile $m$ and $aR_i b$.

Assume that $m$ is an equilibrium. Suppose that agent $i$ deviates from $m_i=(b,c_i)$ to $m'_i=(a,c_i)$. This means that $B^*$ is not anymore an effective bloc in favor of $b$ in the profile $(m'_i,m_{-i})$. Moreover, since $B^*= \cap_{B \in B^{m}}B$, there is no other remaining bloc in the profile $(m'_i,m_{-i})$ as shown by Lemma \ref{prop:effective}; thus the outcome $\chi_{BF}(m'_i,m_{-i})$ is a lottery
with support $a$ and $b$ with $a$ being selected with positive probability since $\eta_{i}(m)>0$ and thus $\eta_i(m'_i,m_{-i})>0$ ($i$ was nominated by some other agent in $m$, being part of $B^*$). Thus, by $\textbf{SD}$, $m'_i$ is a profitable deviation for $i$ since it increases the probability of $a$ being selected, proving that $m$ is not an equilibrium.

\vspace{.3cm}

\noindent B.2. There is no equilibrium which selects $b$ with positive probability.

\vspace{.2cm}
Assume that there is some equilibrium $m$ where the outcome is a full-support lottery.

Notice that the following two statements hold for any equilibrium profile $m$ with the outcome being a lottery:

(1) any agent $i$ who is nominated ($\eta_i(m)>0$) is sincere. 

(2) any agent nominates the largest number of agents who announce her preferred option. 
In other words, if $aR_ib$ then $|\{j : v_j=a \text{ and } j \in c_i\}|=\min\{p ,\{h \in I : v_h=a\}\}$.

Indeed, (1) holds since  with $\eta_i(m)>0$ the vote of agent $i$ affects the final outcome, thus, voting sincerely increases the probability of $i$'s favorite option being selected. Statement (2) holds since the weight
$\eta^{x}(\cdot)$ is increasing in the sum of the weights of $x$-agents and each agent's weight strictly increases on the number of votes that she receives.

Given that (1) and (2) hold since $m$ is an equilibrium and that $B^m=\emptyset$, there is some majority agent which votes $b$ and is not nominated. Indeed, assume this is not the case and such agent does not exist. According to (1) all nominated agents vote sincerely. It follows from (2) then that all majority agents nominate only other majority agents who are also sincere. This means that a bloc in favor of $a$ exists contradicting $B^m=\emptyset$. Consider then some minority agent $j$, i.e. $bR_ja$. Since (1) holds, $c_j$ does not include any majority agent who votes $b$, that is $|\{h \in c_j : v_h=b\}|<p$. Then since $\eta^b(m)$ is increasing in the number of nominations of $b$-agents, agent $j$ has a profitable deviation: to nominate agent $i$ in $c_j$ rather than some $a$-agent. Formally, $m'_j=(b,c'_j)$ with $c'_j=(c_j \setminus \{h\}) \cup \{i\}$ for some $h$ with $v_h=a$. This contradicts $m$ being an equilibrium, and concludes the proof.

\section{RC mechanism with abstention}

Proof of Proposition \ref{prop:RCabst}

   In the Confirmation stage, all agents in $\{\pi_1,\ldots,\pi_{p+1}\}$ are indifferent between announcing $N$ or abstaining. Indeed, by construction the mechanism treats equally these announcements and, in the Confirmation stage, the best response does not depend on the previous announcements. 

    Assume first that the Voting stage admits a unique winner. In this case, the Confirmation stage outcome coincides with the one presented in Lemma \ref{lemma:RCcompl}.

    Assume now that both $a$ and $b$ are tied in the profile $v$. Thus,  the outcome is a lottery which assigns probability of 1/2 to each of the options. Therefore, all agents involved in the Confirmation stage are indifferent between all 3 possible announcements. Then, the counterpart of Lemma \ref{lemma:RCcompl} can be formulated as follows.

    \begin{lemma}\label{lemma:RCabst}
        For any non-unanimous profile $v$ of the Voting stage, the SPE outcome of the Confirmation stage is:

        - $x$ if $x$ is the unique winner in $v$ and $x R_i y$ for some $i \in \{\pi_1,\ldots,\pi_{p+1}\}$,

        - a lottery $\beta(v)$ if $x$ is the unique winner in $v$ and $y R_i x$ for any $i \in \{\pi_1,\ldots,\pi_{p+1}\}$,

        - a lottery which assigns equal probabilities to both options if $v$ does not admit a unique winner.
    \end{lemma}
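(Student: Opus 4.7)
The plan is to split the argument according to whether the Voting stage profile $v$ admits a unique winner, and in the unique-winner case to adapt the backward induction of Lemma~\ref{lemma:RCcompl} to account for the third available action (abstention) in the Confirmation stage.

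First I would dispose of the no-unique-winner case. If the numbers of $a$-votes and $b$-votes in $v$ are equal (possibly both zero, i.e.\ unanimous abstention), the RC mechanism with abstention declares the outcome to be the $(1/2,1/2)$ lottery irrespective of every Confirmation stage announcement. Every action is therefore payoff-equivalent for every selected confirmer, any strategy profile constitutes an SPE of the Confirmation subgame, and the outcome is precisely the claimed equal-weight lottery.

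Next, suppose $v$ admits a unique winner $x$, and let $y$ denote the other option. The key observation, already flagged in the paragraph preceding the lemma, is that along every Confirmation stage history the announcements $N$ and ``abstain'' trigger identical continuation play and hence identical final outcomes. Each confirmer's decision at her turn therefore reduces to a binary choice between $Y$ and a common ``not-$Y$'' action, which is exactly the two-action situation analyzed in Lemma~\ref{lemma:RCcompl}. Running the same backward induction verbatim starting from $\pi_{p+1}$ and propagating upward gives: if some $i\in\{\pi_1,\ldots,\pi_{p+1}\}$ has $xR_iy$, then in any SPE an announcement of $Y$ occurs at the earliest such position, terminating the game at $x$; otherwise every selected confirmer weakly prefers ``not-$Y$'' (strictly so whenever $\beta(v)$ assigns positive weight to $y$) and the outcome is the lottery $\beta(v)$.

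The main obstacle is purely bookkeeping: one has to check that the $N$/abstain equivalence really propagates through every subgame, including those reached after earlier confirmers have mixed over, or selected either, of the two ``not-$Y$'' actions. This is immediate from the construction of the mechanism, since the continuation depends only on whether $Y$ has been announced and on the Voting stage profile $v$, but it deserves an explicit remark so that the reduction to Lemma~\ref{lemma:RCcompl} is uncontroversial. Apart from this bookkeeping, no new ideas beyond those already used for the no-abstention version are needed.
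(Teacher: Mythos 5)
Your proposal is correct and takes essentially the same route as the paper: the tied-vote case is settled by noting that the fixed $(1/2,1/2)$ outcome makes every Confirmation announcement payoff-equivalent, and in the unique-winner case the mechanism's identical treatment of $N$ and abstention reduces the Confirmation subgame to the two-action backward induction of Lemma~\ref{lemma:RCcompl}. One small imprecision: in an SPE the announcement $Y$ need not occur at the \emph{earliest} position of an $x$-preferring confirmer (earlier such confirmers may be indifferent and announce a not-$Y$ action), but the outcome is still $x$ because the last $x$-preferring confirmer in the draw strictly prefers $Y$ whenever $\beta(v)$ puts positive weight on $y$, so the lemma's conclusion is unaffected.
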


    Consider now the Voting stage of the mechanism. We show that there is no equilibrium which selects $b$ (the minority preferred option) with positive probability. By contradiction, assume that such equilibrium exists.

    \textbf{Case 1:} The outcome is deterministic and selects $b$ with probability 1 for all orders $\pi$. In this case, given Lemma \ref{lemma:RCabst}, one of the following statements holds:

    - all participating majority agents vote $b$. If any of these agents deviates to $v'_i=a$, this is a profitable deviation since there is positive probability that only majority agents are selected at the Confirmation stage and, by Lemma \ref{lemma:RCabst}, the outcome in this case is a lottery;

    - no majority agent participates. Then for any majority agent $i$ with $v_i=abs$, deviating to $v'_i=a$ is profitable since it leads to a lottery as an outcome with positive probability.

    \textbf{Case 2:} The outcome is $b$ with positive probability. Notice, that if $a$ is the winner of the Voting stage, $a$ is the outcome for all possible orders $\pi$ since some majority agent is among the first $p+1$ agents at the Confirmation stage. Thus, if $b$ is selected with positive probability, she is the winner of $v$, or that $v_j=\emptyset$ for all $j \in I$. If $b$ is the winner of $v$ then there is some majority agent who either abstains or votes for $b$ in the Voting stage.

    - Assume $v_j=abs$ for all $j \in I$. In this case any agent has incentives to enter and vote for her favorite option since this option will be the outcome with only one agent present at the Voting stage.

    - Assume $b$ is the winner and there is some majority agent $i \in I$ with $v_i=b$. Then the deviation to $v'_i=a$ is profitable. Indeed, if after this deviation $a$ is the winner of $v$, $a$ is the outcome of the mechanism. Otherwise, the outcome is a lottery based on the Voting stage profile for any $\pi$. The deviation from $v_i$ to $v'_i=a$ increases the probability of $a$ in such lottery.

    - Assume $b$ is the winner and there is some majority agent $i \in I$ with $v_i=abs$. Then the deviation to $v'_i=a$ is profitable. Indeed, if after such deviation $a$ is the winner of the Voting stage, $a$ is the equilibrium outcome. 
If this is not the case, $b$ is the winner of $(v'-i,v_{-i})$. Thus, if some minority agent is in $\{\pi_1,\ldots,\pi_{p+1}\}$, $b$ is the outcome by Lemma \ref{lemma:RCabst}. However, if only majority agents are in $\{\pi_1,\ldots,\pi_{p+1}\}$, the outcome is the lottery $\beta(v'_i,v_{-i})$. Deviation by $v$ to $v'_i=a$ increases the probability of $a$ in such lottery: assuming that the number of votes for $a$ in $v$ is $n_a$ and the total number of the Voting stage participants is $n$ the probabilities are the following:

    $$\beta_a(v)=\frac{n_a}{n}<\frac{n_a+1}{n+1}=\beta_a(v'_i,v_{-i}).$$

    This concludes the proof.

\section{RC with shares revelation}

In this section we prove that the implementation results presented in Section 4 are robust to a simplification of RC mechanism where only the shares of votes for alternatives are revealed at the end of the voting stage. 

Note first that this relaxation of the mechanism does not affect the logic of the complete information stage, that is the proof of Proposition 1 remains correct. Indeed, since we work with complete information, the agents cannot extract any additional information from knowing the entire profile $v$ compared to knowing only the shares of votes for different options.

With incomplete information, however, the agents can extract more information about the other agents' preferences from the full profile $v$ rather than from the shares. In other words, the validity of Lemma \ref{lemma:RCincompl} is not guaranteed to hold. In what follows we show that this is indeed the case. For simplicity we continue to denote by $x$ the majority winner of the Voting stage and by $y$ the remaining option.

\begin{claim}
If in the RC mechanism only vote shares are revealed, for any strategy profile $\sigma^1$ and Voting stage profile $v$, the PBE outcome of the Confirmation stage is:

\begin{itemize}
    \item[--] $x$ if $v_i=x$ for all $i \in I$,
    \item[--] $x$ if $xR_iy$ for some $i \in \{\pi_1,\ldots,\pi_{p+1}\}$,
    \item[--] the lottery $\beta(v)$ otherwise.
\end{itemize}
\end{claim}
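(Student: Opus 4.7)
The plan is to follow the same backward-induction strategy used in the proof of Lemma \ref{lemma:RCincompl}, and to show that restricting the Voting-stage information to the shares only (instead of the full profile) does not change the Confirmation-stage outcome. The intuition is that shares-only revelation provides \emph{less} information than full-profile revelation, so the posteriors of each $\pi_t$ over her successors' preferences become more diffuse; this can only relax the conditions that make an agent strictly prefer $Y$ or $N$, while leaving the set of strict best responses qualitatively unchanged.

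First I would dispense with the easy cases: if $v_i = x$ for all $i$, the outcome is $x$ regardless of the Confirmation stage; and if $v$ is non-unanimous, I would apply backward induction starting from $\pi_{p+1}$, whose unique best response, given that all earlier movers announced $N$, is $Y$ if $xR_{\pi_{p+1}}y$ and $N$ otherwise (her choice directly determines whether the outcome is $x$ or the lottery $\beta(v)$). For an intermediate mover $\pi_t$ I would derive, exactly as in Lemma \ref{lemma:RCincompl}, that $Y$ is the (strict) best response when $xR_{\pi_t}y$ provided $\pi_t$ assigns positive probability to the event that every successor prefers $y$ to $x$, and symmetrically $N$ is the (strict) best response when $yR_{\pi_t}x$.

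The crux of the argument — and the step I expect to be the main obstacle — is showing that the required positive-probability condition still holds when only shares, and not the whole profile $v$, are disclosed. For this I would argue that under the i.i.d. prior (or the weaker conditions stated after Proposition \ref{pro:MVRCincom}), and for any Voting-stage strategy profile $\sigma^1$, the posterior of $\pi_t$ over the preferences of $\pi_{t+1}, \ldots, \pi_{p+1}$ conditional on her type, the shares $(n_a, n_b)$, and the history of previous $N$-announcements assigns strictly positive weight to every configuration of those preferences that is consistent with the observation. The key point is that shares are a coarsening of the full profile: any event that has positive probability conditional on the full profile also has positive probability conditional on the shares, so the positive-probability hypothesis used in the proof of Lemma \ref{lemma:RCincompl} carries over. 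In particular, the event ``every successor prefers $y$ to $x$'' remains positive-probability because the prior is i.i.d.\ with $q_a \in (0,1)$ and the identities of the successors in the uniformly drawn $\pi$ are independent of preferences.

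Combining these observations, the PBE strategies of the Confirmation stage are: any $\pi_t$ with $xR_{\pi_t}y$ announces $Y$ (or is indifferent because some later mover with the same preference will announce $Y$), and any $\pi_t$ with $yR_{\pi_t}x$ announces $N$. Therefore, if some $i \in \{\pi_1, \ldots, \pi_{p+1}\}$ prefers $x$, the first such mover triggers (directly or via a successor) the outcome $x$; and if no such $i$ exists, every selected agent announces $N$ and the outcome is $\beta(v)$, matching the three cases of the Claim.
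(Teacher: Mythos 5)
Your overall backward-induction skeleton matches the paper's, but the step you yourself flag as the crux is where the argument breaks. You assert that the positive-probability hypothesis of Lemma \ref{lemma:RCincompl} ``carries over'' because shares are a coarsening of the full profile, and ``in particular, the event `every successor prefers $y$ to $x$' remains positive-probability.'' Two problems. First, the hypothesis does not always hold even under full-profile revelation (Lemma \ref{lemma:RCincompl} itself must cover the case where a successor's revealing vote makes the relevant posterior zero), so ``positive under the finer information, hence positive under the coarser'' has nothing to apply to in exactly the cases that matter. Second, the strengthened claim is simply false: conditioning on the shares together with knowledge of $\sigma^1$ can pin down successors' preferences. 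For example, if $\sigma^1$ is truthful and the shares show $n-1$ votes for $x$ and one for $y$, an agent with two or more successors assigns probability zero to ``all successors prefer $y$''; the paper's own footnote in Appendix C gives precisely this kind of configuration. So indifference cases do arise under shares revelation, and a proof must say why they do not upset the outcome characterization.

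The paper handles this by splitting the analysis of an intermediate mover into the case where she knows the successor's preference (from $\sigma^1$, the shares, and the history of $N$'s) and the case where she does not, and then closing the indifference cases with the observation that in equilibrium an agent assigns probability one only to events that actually occur: if an $x$-preferring mover is indifferent, her certainty that some later selected agent prefers $x$ is correct, and the last such agent (who cannot be correctly certain of a nonexistent $x$-preferring successor) strictly announces $Y$; symmetrically, when no selected agent prefers $x$, no $y$-preferring mover can be correctly certain of the contrary, so all strictly announce $N$ and $\beta(v)$ results. Your final paragraph gestures at this (``or is indifferent because some later mover with the same preference will announce $Y$''), but that is exactly the assertion that needs the correct-beliefs-plus-induction argument; as written it is assumed rather than proved, and your claim that every $y$-preferring mover announces $N$ is also not forced (she may be indifferent), though that slip does not affect the outcome. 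Replace the coarsening argument with the explicit treatment of the certainty cases and the equilibrium-beliefs step, and the proof goes through.
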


Assume some non-unanimous profile $v$ (so the agents know that the shares of both options are positive) and consider, firstly, agent $\pi_{p+1}$. It is the last agent to cast a vote, thus, she has the  information necessary to determine the outcome. Thus, her strategy remains unchanged, she votes $Y$ is $x R_{\pi_{p+1}}y$ and $N$ otherwise.

Consider now the agent $\pi_{p}$. There are two possible cases:

\textbf{Case 1: } Agent $\pi_p$ knows the preference of $\pi_{p+1}$ based on the strategy profile $\sigma^1$, on $|\{i \in I:v_i=a\}|$ and on the fact that no predecessor voted $Y$ in the Confirmation stage. \footnote{Note that there exist combinations of strategy profiles and Voting stage vote shares such that the knowledge of $\pi_{p+1}$ is possible: for instance, if all agents vote truthfully in the Voting stage, and all agent $\pi_p$ was the only vote to vote for the preferred alternative.} In this case, she strictly prefers to vote $Y$ if $xR_{\pi_p}y$ and the preference of $\pi_{p+1}$ are opposing, and to vote $N$ if $y R_{\pi_p} x$ and the preference of $\pi_{p+1}$ are the same. Otherwise, agent $\pi_p$ is indifferent between voting $Y$ and $N$.

\textbf{Case 2:} Agent $\pi_p$ does not know the preference of $\pi_{p+1}$ prior to her vote, that is she assigns positive probability to $x$ being both the preferred and the least preferred option of agent $\pi_{p+1}$. In this case $\pi_p$ strictly prefers to vote $Y$ if $xR_{\pi_p}y$ and $N$ if $y R_{\pi_p} x$, that is to vote according to her true preference to maximize the probability of her favorite option being elected.

Considering an arbitrary agent $i \in \{\pi_1,\ldots \pi_{p+1}\}$ she is indifferent between voting $Y$ and $N$ if

\begin{itemize}
    \item[--] $x R_i y$ and she assigns 0 probability to an event in which all agents to vote after her in the Confirmation stage have opposing preference;
    \item[--] $y R_i x$ and she assigns 0 probability to an event in which all agents to vote after her in the Confirmation stage have the same preference.
\end{itemize}

 Indeed, if none of the 2 cases holds, agent $i$ either prefers $x$ to $y$ and believes that she is the last person to vote with such preference with positive probability (so no subsequent agent will vote $Y$), or she prefers $y$ to $x$ and believes that none of the subsequent agents will vote $Y$ with positive probability (due to identical to $i$'s preference).

Otherwise, agent $i$ strictly prefers to vote according to her preference, that is to vote $Y$ if $x R_i y$ and to vote $N$ if $y R_i x$.

Note that since in equilibrium agents know the strategies of others, they assign probability 1 to an event only if the event takes place effectively. This completes the proof of the Lemma.

\section{RC with an even number of agents \label{section:rceven}}

The RC mechanism for the case of an even number of voters is extended by adding the following step: in case the Voting stage has a unique winner, then proceed to Confirmation stage as before; if both options get the same number of votes in the Voting stage, the outcome is a lottery which assigns equal weights to both options (no Confirmation stage needed).

\begin{claim}
    With an even number of agents, the RC mechanism subgame perfect implements the majority rule with any lottery being an equilibrium outcome for all $R \in \mathcal{R}$ such that $Maj(R)=A$.
\end{claim}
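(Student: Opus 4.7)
The plan splits the argument according to the extended Nash implementation notion introduced in Section~\ref{sec:evenBF}: when $|Maj(R)|=1$ the mechanism must force the unique majority winner in every SPE, whereas when $Maj(R)=A$ any equilibrium outcome is deemed acceptable. I read the informal phrase ``any lottery being an equilibrium outcome'' as asserting that, in contrast to the BF mechanism for even $n$ (which pins down the equal-weights lottery uniquely), the RC mechanism admits a multiplicity of SPE whose outcomes need not coincide; the proof will therefore exhibit several SPE producing distinct lotteries.

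For the first part, assume WLOG $Maj(R)=\{a\}$, so at least $p+1$ agents prefer $a$ and at most $p-1$ prefer $b$. The backward-induction argument behind Lemma~\ref{lemma:RCcompl} is parity-independent, so given a Voting-stage profile $v$ with a unique winner $x$, the SPE outcome of the Confirmation stage is $x$ when the random selection $\{\pi_1,\ldots,\pi_{p+1}\}$ contains an $x$-preferrer and the lottery $\beta(v)$ otherwise. Since there are at most $p-1$ minority voters while the selected set has size $p+1$, every realization of $\pi$ contains at least two $a$-preferring agents; hence whenever $a$ wins the Voting stage, the outcome is deterministically $a$. The Voting-stage argument then follows Proposition~\ref{pro:MVRCcom}: if $a$ failed to uniquely win $v$ in some equilibrium, some majority agent would have to be voting $b$ (there are $p+1$ majority agents but at most $p$ votes for $a$), and her deviation to $v_i=a$ would either create a strict majority for $a$, produce a tie redirected by the modified rule to $\tfrac12 a+\tfrac12 b$, or narrow the $b$-majority while raising $\beta_a(\cdot)$; each subcase is strictly profitable by $\textbf{SD}$. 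Existence is witnessed by the truthful profile.

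For the second part, with $Maj(R)=A$ and exactly $p$ agents of each preference, every random subset of $p+1$ from $2p$ necessarily contains at least one agent of each preference type, so by Lemma~\ref{lemma:RCcompl} the Confirmation stage confirms the Voting-stage winner whenever one exists; the modified rule handles tied Voting-stage profiles directly. I would then exhibit three pure-strategy SPE producing distinct outcome lotteries: (i) the unanimous profile $v_i=a$, in which any unilateral deviation still leaves $a$ with $2p-1\geq p+1$ votes so every agent is indifferent and the outcome is $a$; (ii) the symmetric $v_i=b$, yielding $b$; and (iii) the truthful profile, which produces a Voting-stage tie and thus the $\tfrac12 a+\tfrac12 b$ lottery, with any unilateral deviation flipping the unique Voting-stage winner to the deviator's less-preferred option (strictly worse than $\tfrac12 a+\tfrac12 b$ by $\textbf{SD}$). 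These three SPE suffice to show that the set of equilibrium outcomes is not a singleton, as asserted.

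The main obstacle lies in the deviation analysis of Part~1 at the threshold case where $b$ wins the Voting stage by the minimal margin $\ell=p-1$. Here the majority agent's deviation to $v_i=a$ produces a tied Voting stage, and the induced outcome is the equal-weights lottery via the modified rule rather than a narrower $b$-win handled by Lemma~\ref{lemma:RCcompl}. The strict profitability relative to the pre-deviation mixture of $b$ (with probability $1-P_0$) and $\beta(v)=\tfrac{p-1}{2p}a+\tfrac{p+1}{2p}b$ (with probability $P_0>0$, the chance that $\{\pi_1,\ldots,\pi_{p+1}\}$ selects only majority agents) reduces to the inequality $\tfrac12>P_0\cdot\tfrac{p-1}{2p}$, which holds since $P_0\leq 1$ and $\tfrac{p-1}{2p}<\tfrac12$; careful verification of this chain of comparisons (and its analogue for the wider-margin cases $\ell\leq p-2$, where the pre- and post-deviation lotteries $\beta(v)$ and $\beta(v')$ must be compared directly) is the key non-routine step distinguishing the even-$n$ argument from the odd-$n$ analysis of Proposition~\ref{pro:MVRCcom}.
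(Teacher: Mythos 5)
Your proposal is correct and takes essentially the same route as the paper: it relies on the parity-independence of Lemma~\ref{lemma:RCcompl} together with the odd-$n$ Voting-stage deviation argument when $Maj(R)$ is a singleton, and, for $Maj(R)=A$, on exhibiting the truthful equilibrium (outcome $\tfrac12 a+\tfrac12 b$) plus unanimous profiles to witness multiplicity of equilibrium outcomes. The only difference is one of detail: you explicitly verify the threshold case in which a majority agent's deviation creates a Voting-stage tie handled by the modified equal-weights rule, a step the paper subsumes under its assertion that ``the argument is identical'' to the odd-$n$ case of Proposition~\ref{pro:MVRCcom}.
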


\begin{proof}
    First of all, note that the logic of Lemma \ref{lemma:RCcompl} holds independently of the number of agents whenever the winner of the Voting stage is well-defined (and the outcome is an equal weight lottery otherwise).

In the Voting stage, the argument is identical to the case of an even number of voters for each preference profile $R$ for which $Maj(R)$ is a singleton. For any $R$ such that $Maj(R)=A$, our extended implementation notion allows any lottery between the two options to be an equilibrium outcome. Thus, to show the existence of equilibrium is sufficient. One such possible equilibrium is the one in which each agent votes in the Voting stage according to her preference and the outcome is a lottery assigning equal weights to both options. To see that such equilibrium is not unique consider the Voting stage profile in which all agents vote $a$. Since among $p+1$ agents participating in the Confirmation stage there exists some agent $i$ such that $a R_i b$, the outcome is $b$. Note that no agent who prefers $b$ to $a$ has a profitable deviation, since $a$ is still the winner of the Voting stage after any such deviation.
\end{proof}

\bibliographystyle{aer}
\bibliography{biblio}

@book{qvortrup2005comparative,
  title={A comparative study of referendums: Government by the people},
  author={Qvortrup, Matt},
  year={2005},
  publisher={Manchester University Press}
}

@article{herrera2010quorum,
  title={Quorum and turnout in referenda},
  author={Herrera, H. and Mattozzi, A.},
  journal={Journal of the European Economic Association},
  volume={8},
  number={4},
  pages={838--871},
  year={2010},
  publisher={Oxford University Press}
}

@article{aguiar2016experimental,
  title={Experimental evidence that quorum rules discourage turnout and promote election boycotts},
  author={Aguiar-Conraria, L. and Magalh{\~a}es, P.C. and Vanberg, C.A.},
  journal={Experimental Economics},
  volume={19},
  pages={886--909},
  year={2016},
  publisher={Springer}
}

@article{abreu1991virtual,
  title={Virtual implementation in Nash equilibrium},
  author={Abreu, Dilip and Sen, Arunava},
  journal={Econometrica},
  pages={997--1021},
  year={1991},
  publisher={JSTOR}
}

@article{blais2019my,
  title={Was my decision to vote (or abstain) the right one?},
  author={Blais, Andr{\'e} and Feitosa, Fernando and Sevi, Semra},
  journal={Party Politics},
  volume={25},
  number={3},
  pages={382--389},
  year={2017},
  publisher={SAGE Publications Sage UK: London, England}
}

@article{azrieli2014pareto,
  title={Pareto efficiency and weighted majority rules},
  author={Azrieli, Yaron and Kim, Semin},
  journal={International Economic Review},
  volume={55},
  number={4},
  pages={1067--1088},
  year={2014},
  publisher={Wiley Online Library}
}

@article{borgers2014robust,
  title={Robust mechanism design and dominant strategy voting rules},
  author={B{\"o}rgers, Tilman and Smith, Doug},
  journal={Theoretical Economics},
  volume={9},
  number={2},
  pages={339--360},
  year={2014},
  publisher={Wiley Online Library}
}

@article{bochet2007nash,
  title={Nash implementation with lottery mechanisms},
  author={Bochet, Olivier},
  journal={Social Choice and Welfare},
  volume={28},
  number={1},
  pages={111--125},
  year={2007},
  publisher={Springer}
}

@article{benoit2008nash,
  title={Nash implementation without no-veto power},
  author={Beno{\^\i}t, Jean-Pierre and Ok, Efe A},
  journal={Games and Economic Behavior},
  volume={64},
  number={1},
  pages={51--67},
  year={2008},
  publisher={Elsevier}
}

@article{moore1988subgame,
  title={Subgame perfect implementation},
  author={Moore, John and Repullo, Rafael},
  journal={Econometrica},
  pages={1191--1220},
  year={1988},
  publisher={JSTOR}
}

@article{aghion2012subgame,
  title={Subgame-perfect implementation under information perturbations},
  author={Aghion, Philippe and Fudenberg, Drew and Holden, Richard and Kunimoto, Takashi and Tercieux, Olivier},
  journal={The Quarterly Journal of Economics},
  volume={127},
  number={4},
  pages={1843--1881},
  year={2012},
  publisher={MIT Press}
}

@article{vartiainen2007subgame,
  title={Subgame perfect implementation of voting rules via randomized mechanisms},
  author={Vartiainen, Hannu},
  journal={Social Choice and Welfare},
  volume={29},
  pages={353--367},
  year={2007},
  publisher={Springer}
}

@article{laslier2021solution,
  title={A solution to the two-person implementation problem},
  author={Laslier, Jean-Fran{\c{c}}ois and N\'uñez, Mat\'ias and Sanver, M. Remzi},
  journal={Journal of Economic Theory},
  volume={194},
  pages={105261},
  year={2021},
  publisher={Elsevier}
}

@article{maskin1999nash,
  title={Nash equilibrium and welfare optimality},
  author={Maskin, Eric},
  journal={The Review of Economic Studies},
  volume={66},
  number={1},
  pages={23--38},
  year={1999},
  publisher={Wiley-Blackwell}
}

@article{laslier2013incentive,
  title={An Incentive-Compatible Condorcet Jury Theorem},
  author={Laslier, Jean-Fran{\c{c}}ois and Weibull, J{\"o}rgen W},
  journal={The Scandinavian Journal of Economics},
  volume={115},
  number={1},
  pages={84--108},
  year={2013},
  publisher={Wiley Online Library}
  }

@article{nunez2019truth,
  title={Truth-revealing voting rules for large populations},
  author={N{\'u}{\~n}ez, Mat{\'\i}as and Pivato, Marcus},
  journal={Games and Economic Behavior},
  volume={113},
  pages={285--305},
  year={2019},
  publisher={Elsevier}
}

@article{azevedo2019strategy,
  title={Strategy-proofness in the large},
  author={Azevedo, Eduardo M and Budish, Eric},
  journal={The Review of Economic Studies},
  volume={86},
  number={1},
  pages={81--116},
  year={2019},
  publisher={Oxford University Press}
}

@article{echenique2022price,
  title={Price \& Choose},
  author={Echenique, Federico and N{\'u}{\~n}ez, Mat{\'\i}as},
  journal={arXiv preprint arXiv:2212.05650},
  year={2022}
}

@article{chen2023getting,
  title={Getting Dynamic Implementation to Work},
  author={Chen, Yi-Chun and Holden, Richard and Kunimoto, Takashi and Sun, Yifei and Wilkening, Tom},
  journal={Journal of Political Economy},
  volume={131},
  number={2},
  pages={285--387},
  year={2023},
  publisher={The University of Chicago Press Chicago, IL}
}

@techreport{xefteris2023participation,
  title={Participation games : Design and Experiments},
  author={Kirneva, Margarita and Núñez, Matías and Xefteris, Dimitrios},
  year={2023},
  institution={mimeo}
}

@article{xiong2021designing,
  title={Designing referenda: An economist's pessimistic perspective},
  author={Xiong, Siyang},
  journal={Journal of Economic Theory},
  volume={191},
  pages={105133},
  year={2021},
  publisher={Elsevier}
}

@article{fudenberg1991perfect,
  title={Perfect Bayesian equilibrium and sequential equilibrium},
  author={Fudenberg, Drew and Tirole, Jean},
  journal={journal of Economic Theory},
  volume={53},
  number={2},
  pages={236--260},
  year={1991},
  publisher={Elsevier}
}

\end{document}